\documentclass{article}
\usepackage[utf8]{inputenc}
\usepackage[T1]{fontenc}
\usepackage[english]{babel}

\usepackage[a4paper, left=20mm, right=20mm, top=2cm, bottom=2cm]{geometry}

\usepackage{amsmath,amssymb,amsthm,mismath}
\usepackage{thm-restate}
\usepackage{algorithm,algcompatible}

\newcommand{\bra}[1]{\mathinner{\left \langle #1 \right\rvert}}
\newcommand{\ket}[1]{\mathinner{\left \lvert #1 \right\rangle}}
\DeclarePairedDelimiterX\ketbra[2]{| }{|}{#1 \delimsize\rangle\!\delimsize\langle #2}
\providecommand{\nrm}[1]{\left\lVert #1 \right\rVert}

\newtheorem{lemma}{Lemma}
\newtheorem{corollary}{Corollary}

\usepackage[affil-it]{authblk}
\usepackage[dvipsnames,svgnames,x11names]{xcolor}
\usepackage[pdftex,bookmarks,pagebackref=true, draft=false]{hyperref}
\hypersetup{ % a color scheme often used in our drafts - but feel free to change as desired
    bookmarksnumbered=true, % If Acrobat bookmarks are requested, include section numbers
    unicode=false, 
    pdfstartview={FitH}, % fits the width of the page to the window
    pdftitle={Lower bounds on hybrid quantum classical query algorithms}, % title
    pdfauthor={Andris Ambainis, András Gilyén, Martins Kokainis}, % author
    pdfsubject={}, % subject of the document
    pdfcreator={}, % creator of the document
    pdfproducer={}, % producer of the document
    pdfkeywords={}, % list of keywords
    pdfnewwindow=true, % links in new window
    colorlinks=true, % false: boxed links; true: colored links
    urlcolor=WildStrawberry, % color of external links
    linkcolor=blue, % color of internal links
    citecolor=ForestGreen, % color of links to bibliography
    filecolor=BrickRed % color of file links
}

\usepackage{mleftright}\mleftright % fixes bracket spacing
\usepackage{cleveref}
\newcommand{\raw}{\mathrm{raw}}
\newcommand{\inlabel}{\mathrm{in}}
\newcommand{\outlabel}{\mathrm{out}}

\makeatletter
\newcommand{\shiftdown}[2][0.5ex]{\mathpalette{\shiftdown@{#1}}{#2}}
\newcommand{\shiftdown@}[3]{\raisebox{-#1}{$\m@th#2#3$}}
\makeatother

\usepackage{ifdraft}
\usepackage{comment}
\ifdraft{
	\newcommand{\authnote}[3]{{\color{#3} {\bf  #1:} #2}}
}{
	\newcommand{\authnote}[3]{}
}

\begin{document}
\title{Tight bounds for hybrid quantum-classical query algorithms}
\author[1]{Andris Ambainis}
\author[2]{András Gilyén}
\author[1]{Martins Kokainis}
\affil[1]{Center for Quantum Computer Science, Faculty of Computing,
University of Latvia} 
\affil[2]{Alfréd Rényi Institute of Mathematics, Budapest, Hungary} 
\date{}
\maketitle

\begin{abstract}
We study hybrid quantum-classical algorithms in the query model. The computation consists of quantum subroutines that make at most $q$ oracle queries; between subroutines, all qubits are measured and discarded. We prove matching upper and lower bounds for several problems in this model:
\begin{itemize}
    \item (Unbiased) Phase estimation up to a precision $\epsilon$ requires $\Theta(\frac{1}{q \epsilon^2})$ queries;
    \item (Unbiased) Amplitude estimation up to a precision $\epsilon$ requires $\Theta(\frac{p(1-p)}{q \epsilon^2})$ queries;
    \item Search among $N$ items requires $\Theta(\frac{N}{q})$ queries;
    \item Two level AND-OR tree (AND of $m$ ORs, with $n$ inputs to each OR) requires $\Theta(\frac{nm}{q})$ queries.
\end{itemize}
All of these bounds are optimal up to a constant factor, for all $q$ from 1, corresponding to the classical complexity, to $Q(f)$, the unrestricted quantum query complexity of the respective problem. We also derive a hybrid lower bound for distinguishing two distributions specified by a state-preparation unitary; this bound is tight up to a logarithmic factor.

Besides specific lower bounds, an important contribution is developing methods for proving lower bounds on hybrid quantum algorithms (which have been very ad-hoc up to now). The first two bounds follow from a common framework: we analyze probability distributions over measurement transcripts and bound a progress measure that captures their distinguishability. The AND-OR bound requires a more delicate argument, combining two progress measures that track the information available to the algorithm classically and in quantum form, respectively.
\end{abstract}

\section{Introduction}

Hybrid quantum-classical computation is a major paradigm for near-term quantum computing. Limitations on coherence time and error correction may prevent the execution of a single long quantum computation, but one can try to compensate for that
by running a sequence of short quantum computations, with later computations chosen on the basis of earlier measurement outcomes. This makes it important to understand the power of algorithms built from many short quantum subroutines.

The computational power of this model is not immediately clear. Classical feedback can make the sequence of quantum computations highly adaptive, but measuring between subroutines destroys the quantum coherence that is responsible for the quantum advantage. A natural question is how the power of a hybrid algorithm depends on the number of coherent quantum steps that can be performed in a subroutine. In this paper, we study this question in the query model, which is commonly used for analyzing quantum algorithms. We use $q$ to denote the maximum number of queries performed within one coherent quantum block; varying $q$ therefore quantifies the tradeoff between the length of a coherent computation and total query complexity.

There are two complementary ways to study this question. One is to seek general characterizations of the model, asking, for example, how large a quantum advantage is possible when every coherent block contains at most $q$ queries. The other is to study specific computational tasks, such as search, counting, or element distinctness, and determine how their complexity varies with $q$. The first perspective describes the ultimate power and limitations of the model, while the second reveals which of the familiar quantum speedups remain robust under restricted coherence.

From the first perspective, Simon's problem \cite{simon1997SimonsAlgorithm} or Forrelation \cite{AaronsonAmbainis2018} provide examples where exponential or superexponential advantage can be achieved even with 1-query blocks. 
For total functions, a polynomial advantage is possible: Sun and Zheng \cite{sun2019-hybrid} constructed a total Boolean function $f$ satisfying
\[
Q(f;1)=\widetilde{\bigO}\bigl(R(f)^{4/5}\bigr),
\]
where $R(f)$ denotes the randomized query complexity and $Q(f;1)$ denotes the
quantum query complexity with blocks of size 1. The dependence on $q$ can be even more dramatic. The oracle separation of Hasegawa and Le Gall \cite{HasegawaLeGall2022}, although formulated in terms of quantum depth, can be recast in hybrid query language as giving problems with polynomial query complexity for $q=d+1$ but exponential query complexity for $q=d$.

From the second perspective, possible strategies for splitting up 
a quantum algorithm into shorter quantum subroutines have been explored for various problems, from basic search \cite{sun2019-hybrid} to amplitude estimation \cite{giurgicaTiron2020LowDepthAE} to AND-OR trees \cite{murcca2024two}. The lower-bound side, however, is much less developed than in the standard quantum query model, where powerful general techniques such as the adversary method \cite{andor-ambainis,reichardt2011reflections} and the polynomial method \cite{beals2001QLowerBoundPoly} are available.

In the hybrid case, only a few nontrivial quantum lower bounds are known.
Sun and Zheng \cite{sun2019-hybrid} have shown a tight quantum lower bound for search but the argument is heavily tailored to Grover's search problem. 
A more broadly applicable method is the hybrid version of the compressed oracle method \cite{hamoudi2024nisq} which gives tight lower bounds for search and collision-finding. However, similarly to the original compressed-oracle method \cite{Zhandry19}, it is mostly useful when the hard input distribution is a product distribution (that is, when the input variables are independent).

Our goal is to develop a systematic approach for proving hybrid query lower bounds (similarly to the role of the adversary method in the standard quantum query model). We develop a framework which gives three main results. First, we obtain a general lower bound for binary classification of unitaries which gives optimal tradeoffs for phase and amplitude estimation and near-optimal bounds for distribution distinguishing with purified access. Second, the same framework gives a new proof of the tight lower bound for search. Third, by combining this framework with ideas from the compressed-oracle method, we prove the first optimal lower bound for the two-level AND-OR tree. Thus, our results include both new lower bounds and unified proofs of previously known ones.

The first main result is the lower bound for binary classification. Let ${\cal Q}_0, {\cal Q}_1$ be two sets of unitaries acting on the same finite dimensional space. Given query access to an unknown unitary $U \in {\cal Q}_0 \cup {\cal Q}_1$, the task is to  determine whether $U$ is in ${\cal Q}_0$ or ${\cal Q}_1$.
\begin{restatable}[Hybrid lower bound for binary classification]{theorem}{classification}\label{thm:classification}
Assume that quantum subroutines use at most a total of $q$ (controlled) queries to $U$ and $U^\dagger$. If there exist $U_0 \in {\cal Q}_0$ and $U_1 \in {\cal Q}_1$ for which $\nrm{U_0 - U_1} = \epsilon$, then any hybrid algorithm that solves this binary classification problem with bounded error requires $\Omega(\frac{1}{\epsilon}+\frac{1}{q \epsilon^2})$ queries in the worst case.
\end{restatable}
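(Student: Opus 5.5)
We reduce the problem to distinguishing the two fixed oracles $U_0$ and $U_1$. Any hybrid algorithm that solves the classification problem must output different answers with bounded error on these two inputs. Fix such an algorithm. Its whole run is a sequence of $T$ quantum subroutines $A^{(1)},\dots,A^{(T)}$, each using at most $q$ queries, and the choice of $A^{(t)}$ depends on the earlier measurement outcomes. Let $P_b$ be the distribution of the full transcript (all measurement outcomes) when the oracle is $U_b$. Bounded error forces $\mathrm{TV}(P_0,P_1)=\Omega(1)$. So we only need to show that $\mathrm{TV}(P_0,P_1)$ stays small unless the total query count $Tq$ is $\Omega(\frac1\epsilon+\frac1{q\epsilon^2})$.

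The term $\frac1\epsilon$ is the standard fully coherent bound. A hybrid algorithm is a special case of a general quantum query algorithm (intermediate measurements can be deferred). By the usual hybrid argument, replacing each query to $U_0$ by $U_1$ changes the final state by at most $\epsilon$, and controlled queries and $U^\dagger$ give the same bound. So $k$ total queries change the final state by at most $k\epsilon$, and constant distinguishability needs $k=\Omega(1/\epsilon)$. If $q\ge 1/\epsilon$ this term dominates and we are done. We may therefore assume $q\epsilon\le 1$, where the target is $\Omega(\frac1{q\epsilon^2})$.

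For the second term we use a progress measure that adds up over rounds: the Hellinger affinity (Bhattacharyya coefficient) $F(P_0,P_1)=\sum_x\sqrt{P_0(x)P_1(x)}$. It is multiplicative over adaptively chosen rounds, in the sense that $F(P_0,P_1)\ge \prod_t \min_{\text{history}} F_t$. Here $F_t$ is the affinity of the measurement-outcome distributions of round $t$ conditioned on the history. This holds by the chain rule for the Hellinger affinity: write each term as a product of conditional probabilities and use induction, lower bounding each conditional affinity by its minimum. Within one round, the subroutine produces final pure states $\ket{\psi_0}$ and $\ket{\psi_1}$; ancillas and discarding can be purified. The hybrid argument gives $\nrm{\ket{\psi_0}-\ket{\psi_1}}\le q\epsilon$. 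Since the measurement-outcome affinity is at least the fidelity, $F_t\ge |\langle\psi_0|\psi_1\rangle|\ge 1-\frac{(q\epsilon)^2}{2}$. A global phase could spoil the real part, so we take the maximal overlap over phases, or note that the fidelity bound uses $|\langle\cdot|\cdot\rangle|\ge \mathrm{Re}\langle\cdot|\cdot\rangle$. Hence $F(P_0,P_1)\ge (1-q^2\epsilon^2/2)^T$, and since $\mathrm{TV}\le\sqrt{1-F^2}$, constant distinguishability requires $Tq^2\epsilon^2=\Omega(1)$. That is, the total query count is $Tq=\Omega(\frac1{q\epsilon^2})$.

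The main obstacle is making this per-round quadratic bound rigorous. The first-order (linear) bound $q\epsilon$ per round would only give $\Omega(1/\epsilon)$. The saving comes from combining two facts: the Bhattacharyya coefficient loses only quadratically in the state distance, and it tensorizes over adaptive rounds. I would check carefully that the chain rule holds with adaptively chosen subroutines, including variable or random $T$ (handle this by padding to the worst-case $T$). I would also check that subroutines with fewer than $q$ queries or with intermediate measurements still satisfy the fidelity bound, which they do after purification. Finally, we combine the two cases: $\max(\frac1\epsilon,\frac1{q\epsilon^2})=\Theta(\frac1\epsilon+\frac1{q\epsilon^2})$.
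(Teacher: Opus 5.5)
\textbf{There is a genuine gap in the final step.} Your per-round estimates are correct: the measurement Bhattacharyya coefficient is at least $|\langle\psi_0|\psi_1\rangle|\ge 1-\tfrac12\|\psi_0-\psi_1\|^2$, and $\mathrm{TV}\le\sqrt{1-F^2}$. But what you actually prove is a lower bound $T=\Omega(1/(q^2\epsilon^2))$ on the \emph{number of rounds}. You then identify $Tq$ with the query count. The complexity in the theorem is $L=\max_P\sum_{v\in P}r_v$ with $r_v\le q$, so $Tq$ is only an \emph{upper} bound on $L$.

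Several features make $Tq\gg L$ possible. Blocks may use fewer than $q$ queries, their sizes may vary adaptively with the history, and the zero-query dummy rounds you add by padding each contribute $q$ to $Tq$ but nothing to $L$. As written, if every block had $\sqrt q$ queries, you would get only $L=T\sqrt q=\Omega(1/(q^{3/2}\epsilon^2))$. Rerunning with a smaller $q$ does not rescue the general case where sizes vary adaptively. The loss has two sources. You bound every round's state distance by $q\epsilon$ instead of $r_h\epsilon$. And the chain rule $F\ge\prod_t\min_h F_t$ takes a minimum over histories, which cannot be charged to the queries actually made along a path.

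\textbf{The missing idea is to keep the chain rule exact and additive, with the per-history count $r_h$.} Write $G_t(h)=\sqrt{P_0(h)P_1(h)}$ for transcripts $h$ of length $t$. Then $\sum_y G_{t+1}(hy)=G_t(h)F^{(h)}$, where $F^{(h)}\ge 1-\tfrac12 r_h^2\epsilon^2$. Using $\sqrt{ab}\le\frac{a+b}{2}$, $r_h^2\le q\,r_h$, and the fact that the resulting double sum is the average of the expected query counts under the two oracles (hence at most $L$), you get
\[
1-F(P_0,P_1)\le\frac{\epsilon^2}{2}\sum_{t<T}\sum_{|h|=t}G_t(h)\,r_h^2\le\frac{q\epsilon^2}{2}\sum_{t<T}\sum_{|h|=t}\frac{P_0(h)+P_1(h)}{2}\,r_h\le\frac{q\epsilon^2}{2}L .
\]
Combined with $\tfrac19\le\mathrm{TV}^2\le 1-F^2\le 2(1-F)$, this gives $L\ge 1/(9q\epsilon^2)$.

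This repaired argument has the same structure as the paper's proof. The paper uses the harmonic-mean affinity $1-M_t$ instead of your Bhattacharyya coefficient, via $1-M_{t+1}(h)\ge(1-M_t(h))|\langle\psi_0|\psi_1\rangle|^2$. It applies the same averaging with weights $(p_h+q_h)/2$ and the same step $r_h^2\le q\,r_h$. Your $\Omega(1/\epsilon)$ part matches the paper's.
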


This includes phase estimation and amplitude estimation as special cases.

\begin{corollary}[Hybrid query complexity of phase estimation]\label{cor:phaseEstLower}
Suppose we are given (controlled) query access to an unknown unitary $U$, and we are also given an eigenstate $U \ket{\psi} = e^{i \phi} \ket{\psi}$. In our hybrid model, the query complexity of estimating $\phi$ to precision $\epsilon \in (0, 1)$ with success probability $\ge \frac 2 3$ is $\Theta(\frac{1}{\epsilon}+\frac{1}{q \epsilon^2})$. 
\end{corollary}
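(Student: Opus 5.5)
The corollary has two halves. The lower bound follows from Theorem~\ref{thm:classification} by a reduction. The upper bound needs an explicit hybrid algorithm.

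\textbf{Lower bound.} Fix a phase $\phi_0$ and set $\phi_1=\phi_0+3\epsilon$. Take the one-qubit-plus-workspace unitaries $U_b=e^{i\phi_b}I$, or more concretely $U_b=\mathrm{diag}(e^{i\phi_b},1)$ with eigenstate $\ket{\psi}=\ket{0}$. Then $\nrm{U_0-U_1}=|e^{i\phi_0}-e^{i\phi_1}|=\Theta(\epsilon)$. Let ${\cal Q}_b$ be the set of unitaries whose eigenphase on $\ket{0}$ is $\phi_b$. Since the state $\ket{\psi}=\ket{0}$ is the same for both classes, providing it costs nothing and it can be prepared inside the algorithm. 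An $\epsilon$-precise estimator with success probability $\ge 2/3$ therefore decides which $\phi_b$ holds, by rounding to the nearer of $\phi_0,\phi_1$. This is a bounded-error hybrid algorithm for binary classification. Theorem~\ref{thm:classification} then gives $\Omega(\frac1\epsilon+\frac1{q\epsilon^2})$, since the change $\epsilon\mapsto 3\epsilon$ only affects constants.

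One point needs care. If phase estimation is required to hold for every $\phi$, the reduction is immediate. We should also check that "controlled queries" match the query model of Theorem~\ref{thm:classification}. They do, because the theorem allows controlled $U$ and $U^\dagger$. For the global-phase version, control is exactly what makes the phase observable, which is why we use $\mathrm{diag}(e^{i\phi},1)$.

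\textbf{Upper bound.} Case $q\ge 1/\epsilon$: run standard (Kitaev / QPE) phase estimation with $O(1/\epsilon)$ controlled queries in a single coherent block, boosted by median-of-$O(1)$ repetitions. This uses $O(1/\epsilon)$ queries.

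Case $q<1/\epsilon$: in each block, use about $q$ controlled applications of $U$ to estimate the rescaled phase $k\phi$ for $k\approx q$. First coarsely locate $\phi$ through the geometric sequence $k=1,2,4,\dots,q$. This is Kitaev's iterative scheme with each stage repeated $O(\log)$ times. The stages can be arranged so that the total cost stays $O(q\cdot\text{small})$, for example with decaying failure probabilities per stage, as in the standard robust phase estimation of Kimmel et al.

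At this point $\phi$ is known modulo $2\pi/q$ to accuracy $c/q$. Then repeat the Hadamard test on $U^q$ a number $M$ of times. Each repetition costs $q$ queries and yields $\cos(q\phi)$ or $\sin(q\phi)$ with variance $O(1)$. The averages give $q\phi$ to precision $O(1/\sqrt M)$, hence $\phi$ to precision $O(1/(q\sqrt M))$. Choosing $M=\Theta(1/(q\epsilon)^2)$ gives precision $\epsilon$ with constant success probability by Chebyshev. The total cost is $qM=\Theta(1/(q\epsilon^2))$, which dominates the $O(q)$ coarse stage because $q<1/\epsilon$.

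\textbf{Main obstacle.} The hardest step is making the coarse stage cost only $O(q)$ rather than $O(q\log q)$ while keeping each block at most $q$ queries. I would follow the robust-phase-estimation analysis: stage $j$ uses $k=2^j$ with $O(\log q - j)$ repetitions so that the failure probabilities sum to a constant. The total is then $\sum_j 2^j(\log q-j)=O(q)$.
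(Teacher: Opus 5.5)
Your lower bound matches the paper's argument. The paper applies Theorem~\ref{thm:classification} to $U_0=I$ and $U_1=I+(e^{i\epsilon}-1)\ketbra{0}{0}$, which is your $\mathrm{diag}(e^{i\phi_b},1)$ with $\phi_0=0$. Two of your details are left implicit in the paper: the separation $3\epsilon$, so that rounding an $\epsilon$-accurate estimate really decides the class, and the remark that the eigenstate $\ket{0}$ is oracle-independent.

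Your upper bound is correct but takes a genuinely different route. The paper (\Cref{cor:unbiasedPE}) uses a single primitive per block:
\begin{itemize}
\item prepare $\frac{1}{\sqrt{q+1}}\sum_{k=0}^{q}e^{i\phi k}\ket{k}$ with $q$ controlled queries;
\item apply a random phase shift and an inverse Fourier transform;
\item boost by the shortest-arc rule over $9$-tuples (\Cref{lem:varBound}), which is needed because the unboosted estimate has variance only $O(1/q)$.
\end{itemize}
This gives an unbiased estimator of $e^{i\phi}$ with variance $O(1/q^2)$. Averaging $\Theta(1/(q^2\epsilon^2))$ copies and applying Chebyshev finishes the proof. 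Because that estimator targets $e^{i\phi}$ itself rather than $q\phi \bmod 2\pi$, there is no $2\pi/q$ wrap-around to resolve. So no coarse stage is needed, and the procedure is non-adaptive and unbiased, though it requires the new moment analysis of \Cref{lem:varBound}.

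Your scheme has two stages. First, robust phase estimation in the style of Kitaev and Kimmel--Low--Yoder, with $k=2^j$ and $O(\log q-j+1)$ repetitions per stage, locates $\phi$ within $O(1/q)$ at cost $O(q)$. Second, Hadamard tests on $U^q$ refine the estimate. This uses only textbook ingredients and plain concentration. The price is exactly the $O(q)$-versus-$O(q\log q)$ bookkeeping you flag, which really matters when $q$ is close to $1/\epsilon$. Your estimator is biased, but that is all the corollary asks for.

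Two small fixes are needed:
\begin{itemize}
\item After the coarse stage, $\phi$ must be known modulo $2\pi$, not modulo $2\pi/q$, to accuracy $c/q$ with $c$ a small constant. That is what selects the correct branch among the candidates $\frac{1}{q}(\hat\theta+2\pi\mathbb{Z})$ from the fine stage.
\item The single-block case should start at $q\ge C/\epsilon$, where $C$ is the constant of standard QPE. Your two-stage scheme, which costs $O(q+1/(q\epsilon^2))$, then covers $q<C/\epsilon$.
\end{itemize}
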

\begin{proof}
The lower bound follows from \Cref{thm:classification} by setting $Q_0 = \{I\}$ and $Q_1 = \{I + (e^{i \epsilon} - 1) \ketbra{0}{0}\}$, while the upper bound can be achieved, e.g., via repeated unbiased phase estimation with precision $\bigO(\frac 1 q)$, see~\Cref{cor:unbiasedPE}.
\end{proof}

\begin{corollary}[Hybrid query complexity of amplitude estimation]\label{cor:ampEstLower}
Suppose we are given (controlled) query access to an unknown unitary $U$ such that $U \ket{\bar{0}} = \sqrt{s}\ket{1}\ket{\psi_1}+\sqrt{1-s}\ket{0}\ket{\psi_0}$, and the only thing we are promised is that $s(1-s)\leq p(1-p)$ for some known $p\in(0,1)$. In our hybrid model the query complexity of estimating $s$ to precision $\epsilon \leq p(1-p)$ with success probability $\ge \frac 2 3$ is $\Theta\Big(\frac{\sqrt{p(1-p)}}{\epsilon}+\frac{p(1-p)}{q \epsilon^2}\Big)$. 
\end{corollary}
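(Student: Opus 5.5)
The lower bound follows from \Cref{thm:classification} applied to two oracles whose state preparations differ in their success probabilities by about $2\epsilon$. Fix $s_0,s_1$ with $|s_1-s_0| = 3\epsilon$ and both satisfying $s_i(1-s_i)\le p(1-p)$. For instance, assuming w.l.o.g.\ $p\le 1/2$, take $s_0=p$ and $s_1=p-3\epsilon$; this is valid since $\epsilon\le p(1-p)$ gives $s_1 \geq p - 3p(1-p)$, and we handle small-$p$ edge cases by shifting toward $1/2$, which is allowed because $s(1-s)$ is monotone there. An algorithm estimating $s$ to precision $\epsilon$ distinguishes $\mathcal{Q}_0=\{U_0\}$ from $\mathcal{Q}_1=\{U_1\}$. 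We define $U_i$ as the unitary $R(\theta_i)\otimes I$, where $R(\theta)$ is the real rotation on the first qubit mapping $\ket{0}\mapsto \cos\theta\ket{0}+\sin\theta\ket{1}$, $\sin^2\theta_i=s_i$, and $\ket{\psi_0},\ket{\psi_1}$ are fixed states prepared independently. Then $\nrm{U_0-U_1}=\nrm{R(\theta_0)-R(\theta_1)} = 2|\sin((\theta_0-\theta_1)/2)|\approx|\theta_0-\theta_1|$.

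The key computation is bounding $|\theta_0-\theta_1|$. Since $ds/d\theta = 2\sin\theta\cos\theta = 2\sqrt{s(1-s)}$, the mean value theorem gives $|\theta_0-\theta_1| = |s_0-s_1|/(2\sqrt{\xi(1-\xi)})$ for some $\xi$ between $s_0$ and $s_1$. We need $\xi(1-\xi) = \Theta(p(1-p))$ on that interval. This holds because the interval has length $3\epsilon\le 3p(1-p)$; in the small-$p$ regime we choose the pair on the side away from $0$, e.g.\ $s_0=p$, $s_1=p+3\epsilon$ when $p\le 1/2$, so $\xi\ge p$ while still $\xi\le 4p$. Strictly, $s_1(1-s_1)$ may slightly exceed $p(1-p)$; to fix this, take $s_1=p$ and $s_0=p-3\epsilon$ only when $\epsilon\le p/6$, so $\xi\ge p/2$. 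Otherwise use $s_0=p$, $s_1=1-p$ when $p$ is close to $1/2$ and $\epsilon\ge p/6$. This case analysis is routine but is the main fiddly step. Hence $\nrm{U_0-U_1}=\Theta(\epsilon/\sqrt{p(1-p)}) =: \epsilon'$.

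Plugging into \Cref{thm:classification} gives $\Omega\bigl(\frac1{\epsilon'}+\frac1{q\epsilon'^2}\bigr)=\Omega\Bigl(\frac{\sqrt{p(1-p)}}{\epsilon}+\frac{p(1-p)}{q\epsilon^2}\Bigr)$. Note that $\epsilon'=O(1)$ because $\epsilon\le p(1-p)$.

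For the upper bound, we run unbiased amplitude estimation in blocks of $q$ queries. Each block uses a phase estimation on the Grover iterate with $q$ queries, giving an unbiased (or bias-negligible) estimate of $s$ with variance $O\bigl(s(1-s)/q^2 + 1/q^2\bigr)$; more precisely, the error in the angle is $O(1/q)$, translating to an error in $s$ of $O(\sqrt{s(1-s)}/q)$. We then average $k = O\bigl(p(1-p)/(q^2\epsilon^2)\cdot q^2/ q\bigr)$ blocks; precisely, the averaged variance is $p(1-p)/(q^2 k)\le\epsilon^2$, which needs $k=p(1-p)/(q^2\epsilon^2)$ blocks, i.e.\ $p(1-p)/(q\epsilon^2)$ queries. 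When $q\ge\sqrt{p(1-p)}/\epsilon$, a single block of $\sqrt{p(1-p)}/\epsilon$ queries suffices. Chebyshev or a median-of-means argument boosts the success probability to $2/3$. This mirrors \Cref{cor:unbiasedPE}, which we invoke for the unbiased phase estimator; the remaining concern is controlling the bias when $s$ is near $0$ or $1$, where we rely on the unbiasedness guarantee of that corollary.
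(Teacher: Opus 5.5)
Your route matches the paper's: $Y$-rotations at success probabilities near $p$, a derivative estimate for $\nrm{U_0-U_1}$, then \Cref{thm:classification}, and averaged unbiased amplitude estimates for the upper bound. Separating the oracles by $3\epsilon$ rather than the paper's $\epsilon$ is the more careful choice, since an $\epsilon$-accurate estimate need not separate values only $\epsilon$ apart.

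\textbf{Lower bound: the case analysis fails.} With $p\le 1/2$ the promise set is $[0,p]\cup[1-p,1]$, and each of your cases has a problem.
\begin{itemize}
\item The bound $s_1\ge p-3p(1-p)$ is vacuous, because its right side is negative.
\item Raising $s_1$ toward $1/2$ leaves the promise set, since $s(1-s)$ is increasing there. So $p+3\epsilon$ is never admissible.
\item The pair $(p,1-p)$ is only $1-2p$ apart. That is below $2\epsilon$ in the very regime $p\approx 1/2$ where you invoke it.
\item Small $p$ with $\epsilon>p/6$ is not covered at all.
\end{itemize}
The case split is an artifact of the mean value theorem. Integrating instead gives
\[
|\theta_0-\theta_1|=\int_{p-3\epsilon}^{p}\frac{ds}{2\sqrt{s(1-s)}}\le\sqrt2\bigl(\sqrt p-\sqrt{p-3\epsilon}\bigr)\le\frac{3\sqrt2\,\epsilon}{\sqrt p}
\]
for every $\epsilon\le p/3$, with no lower bound on an intermediate point. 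This is also why the paper's pair $(p-\epsilon,p)$ works even though $p-\epsilon$ can be as small as $p^2$. For $p/3<\epsilon<p/2$, use the pair $(0,p)$: it is $p>2\epsilon$ apart, and $\arcsin\sqrt p\le\frac{\pi}{2}\sqrt p<\frac{3\pi}{2}\,\epsilon/\sqrt p$.

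For $\epsilon\ge p/2$ the bound is trivial if $p=\Omega(1)$. As $p\to0$, however, no argument can work, because the statement itself fails there. A single measurement of the first qubit of $U\ket{\bar 0}$ distinguishes $s\le p$ from $s\ge1-p$ with error at most $p$. Outputting $p/2$ or $1-p/2$ is then $\epsilon$-accurate, so $O(1)$ queries suffice. The lower bound must therefore be read for $\epsilon\le c\,p(1-p)$ with a constant $c<1/2$, as the paper's proof tacitly does.

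\textbf{Upper bound: a second gap.} An angle error $\delta\approx1/q$ from phase estimation on the Grover iterate changes $s=\sin^2\theta$ by $\sin(2\theta)\delta+\cos(2\theta)\delta^2+\dots$. So the per-block error is $O(\sqrt{s(1-s)}/q+1/q^2)$, not $O(\sqrt{s(1-s)}/q)$, and the variance is $O(s(1-s)/q^2+1/q^4)$, as in \Cref{cor:unbiasedAE}.

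For $1\ll q<1/\sqrt{p(1-p)}$ the $1/q^4$ floor dominates. Averaging then costs about $1/(q^3\epsilon^2)$, a factor $1/(q^2p(1-p))$ above the target. Your first formula, with $+1/q^2$, would even give $1/(q\epsilon^2)$. This variance floor, not bias, is the real obstacle when $s(1-s)$ is small.

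The missing ingredient is the paper's intermediate regime: run $k=\Theta(q)$ rounds of amplitude amplification and measure directly. The success probability $\sin^2((2k+1)\theta)\approx(2k+1)^2s$ yields, after inversion, a per-block estimate of $s$ with variance $O(s(1-s)/q^2)$, provided $(2k+1)\theta$ stays below a constant. That holds exactly when $q\lesssim1/\sqrt{p(1-p)}$; phase estimation is needed only for larger $q$.
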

\begin{proof}
Assume without loss of generality that $p\leq 1/2$. The lower bound follows from \Cref{thm:classification} by setting $Q_i = \{\exp(-i\phi_i Y )\}$ for the Pauli $Y$ matrix and $\phi_i=\arcsin(\sqrt{p-i\epsilon})$ so that $U_i\ket{0}=\sqrt{p-i\epsilon}\ket{1}+\sqrt{1-(p-i\epsilon)}\ket{0}$ and $\nrm{U_0-U_1}=\Theta(\epsilon/\!\sqrt{p})$.
The upper bound can be achieved, e.g, via repeated (unbiased) amplitude estimation with variance $\frac{p(1-p)}{q^2}$,~see~\Cref{cor:unbiasedPE}.
\end{proof}

In both corollaries, the estimator obtaining the upper bound is unbiased (except in the $1\ll q \ll (p(1-p))^{-\frac{1}{2}}$ regime for amplitude estimation), and the algorithm is non-adaptive, so all depth $q$ circuits may be run in parallel. The lower bound applies even to potentially biased estimators executed adaptively.

As a further application, we get an almost tight characterization of the hybrid query complexity of distribution distinguishing with purified access. This access model is fairly generic~\cite{gilyen2019DistPropTestQuant}, and includes situations for example when one obtains distributions via (quantum) Monte Carlo methods~\cite{kothari2022MeanEstimationSourceCode}. The following corollary provides a smooth interpolation between the incoherent sample- and the fully coherent query-complexities~\cite{belovs2019QAlgClassProbDist}. 

\begin{corollary}[Hybrid query complexity of distinguishing distributions given by purified access]\label{cor:Hellinger}
Let $r,s\in\Delta_k$ be discrete probability distributions given via purified access as follows: $U_p$ is an unknown unitary corresponding to a uniformly random $p\in\{r,s\}$ such that $U_p\ket{\bar{0}}=\sum_{i\in[k]}\sqrt{p_i}\ket{i}\ket{\psi_i}$. With controlled access to $U_p$, the hybrid query complexity of distinguishing the cases $U_p=U_r$ and $U_p=U_s$ with success probability at lest $\frac23$ is $\widetilde{\Theta}\Big(\frac1{d_H}+\frac1{q d_H^2}\Big)$, where $d_H$ is the Hellinger distance of the distributions
	\begin{equation}\label{eq:hellinger}
		d_H(r,s) \coloneqq \shiftdown[-0.76ex]{\sqrt{\sum_{i\in[k]}\shiftdown[0.96ex]{\frac{\left(\sqrt{r_i}-\sqrt{s_i}\right)^{\shiftdown[0.2ex]{2}}}{2}}}}.
	\end{equation}
\end{corollary}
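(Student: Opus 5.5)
We prove the lower bound from \Cref{thm:classification} and the upper bound by combining amplitude-estimation-based Hellinger tests with classical repetition, paying logarithmic factors.

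\emph{Lower bound.} We want to exhibit unitaries $U_r,U_s$ consistent with the purified-access promise such that $\nrm{U_r-U_s}=O(d_H(r,s))$; \Cref{thm:classification} with $\mathcal Q_0=\{U_r\}$, $\mathcal Q_1=\{U_s\}$ then gives $\Omega(\frac1{d_H}+\frac1{qd_H^2})$. Take the junk states trivial, so that $U_p\ket{\bar0}=\ket{\sqrt p}:=\sum_i\sqrt{p_i}\ket{i}$. The overlap of these two states is real and nonnegative, equal to the Bhattacharyya coefficient $\langle\sqrt r|\sqrt s\rangle=1-d_H^2$. Hence the two states span a two-dimensional plane and differ by a rotation of angle $\theta=\arccos(1-d_H^2)=\Theta(d_H)$. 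Let $R$ be that rotation within the plane, extended by the identity on the orthogonal complement. Fix any $U_r$ with $U_r\ket{\bar0}=\ket{\sqrt r}$ and set $U_s=RU_r$. Then $\nrm{U_r-U_s}=\nrm{R-I}=|e^{i\theta}-1|=\Theta(d_H)$. Controlled access is covered, since \Cref{thm:classification} allows controlled queries and $\nrm{cU_r-cU_s}=\nrm{U_r-U_s}$.

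\emph{Upper bound.} We aim for $\widetilde O(\frac1{d_H}+\frac1{qd_H^2})$ and split into two regimes.
\begin{itemize}
\item If $q\gtrsim 1/d_H$, fully coherent distinguishing as in \cite{belovs2019QAlgClassProbDist} uses $\widetilde O(1/d_H)$ queries and fits in a single block.
\item If $q\lesssim 1/d_H$, we build a single $q$-query subroutine whose output bit has bias gap $\Omega(q^2 d_H^2)$ between the two cases, up to logs. Classical repetition $O(1/(q^2d_H^2))$ times then gives $O(1/(qd_H^2))$ total queries.
\end{itemize}

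For the $q\lesssim 1/d_H$ subroutine, we adapt the unbiased amplitude/phase estimation of \Cref{cor:unbiasedPE}. It lets us estimate a quantity separating $r$ from $s$ with standard deviation $O(1/q)$ per block, and averaging $O(1/(qd_H)^2)$ blocks resolves a gap of size $d_H$. The difficulty is that we have no reference copy of the other distribution. The plan is to locate, via a known classical description of $r$ and $s$ (the distinguishing task presumes both are known), a two-outcome projective measurement $\Pi$ on $i$ such that the accepted probability mass differs by an amount at which this estimation is effective. Concretely we look for $\Pi$ with
\[
\bigl|\sqrt{\Pr_r[\Pi]}-\sqrt{\Pr_s[\Pi]}\bigr|\gtrsim d_H/\sqrt{\log k}.
\]
We then estimate the amplitude $\sqrt{\Pr_p[\Pi]}$, marked by $\Pi$, to precision $O(1/q)$ per block using the unbiased estimator of \Cref{cor:unbiasedPE}.

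The main obstacle is finding $\Pi$. The approach is to bucket the indices $i$ dyadically by the ratio $r_i/s_i$: the Hellinger distance then concentrates, up to a logarithmic loss, on one bucket, and projecting onto that bucket gives the required amplitude gap. This bucketing is where the $\widetilde\Theta$ loss enters. If it fails, the fallback is a generic reduction from Hellinger distance to the amplitude-estimation setting of \Cref{cor:ampEstLower} with matching parameters $p(1-p)/\epsilon^2\approx 1/d_H^2$.
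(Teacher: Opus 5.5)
Your lower bound is correct, and it is more explicit than the paper's one-line remark: with trivial junk states and $U_s=RU_r$ you get $\nrm{U_r-U_s}=\sqrt{2}\,d_H$ for every pair $r,s$, so \Cref{thm:classification} applies directly. Your upper bound has the same architecture as the paper's: coarse-grain to a two-outcome partition $B\subseteq[k]$, then run amplitude estimation on $r^B,s^B$. The gap is in the step that carries all the content, namely the existence of a good $B$. You misstate it, and the bucketing argument you sketch for it does not work. The paper does not prove this step. It imports it from \cite[Corollary~3.4]{pensia2022CommConstrainedHypTesting}, whose loss is $O(\min(k,\log(1/d_H)))$, so the tilde hides factors of $\log(1/d_H)$, not of $\log k$. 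There are also two smaller points. First, two coins whose biases differ by $\delta$ generally need $\Theta(1/\delta^2)$ samples, so your block must achieve squared Hellinger distance $\Omega(q^2d_H^2)$ between its output distributions, not a bias gap of that size. Second, \Cref{cor:unbiasedPE} presupposes an eigenstate. For $\Pr_p[B]$ you should instead invoke the upper bound of \Cref{cor:ampEstLower} on $r^B,s^B$, which also covers the coherent regime.

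Write $r(B)=\sum_{i\in B}r_i$. A loss of $O(\sqrt{\log k})$ in $d_H$ is impossible in general. Take outcomes $1,\dots,k$ with $s_i=c\,4^i$, $r_i=s_i(1+2^{-i})$ and $c=\Theta(4^{-k})$. Put the remaining mass, at least $1/2$ under both distributions, on an extra outcome $0$. Each $i\ge1$ contributes $\Theta(c)$ to $d_H^2$, and outcome $0$ adds only $O(c)$. So $d_H^2=\Theta(kc)$ and $k=\Theta(\log(1/d_H))$. Any $B$ may be complemented so that $0\notin B$. If $i^*$ is its largest element, then $s(B)\le 1/2$ gives $d_H^2(r^B,s^B)\le (r(B)-s(B))^2/s(B)\le (c\,2^{i^*+1})^2/(c\,4^{i^*})=4c$. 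Hence every binary test loses a factor $\Omega(\sqrt{k})$ in $d_H$.

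The same example shows why ``bucket dyadically by $r_i/s_i$ and keep one bucket'' fails. All of $1,\dots,k$ have $r_i/s_i\in(1,3/2]$, yet no subset of them captures more than $O(c)$. At the other extreme, there can be arbitrarily many buckets with $r_i/s_i\ge 2$, each holding a small share. A version that works has two parts:
\begin{itemize}
\item Merge all $i$ with $r_i\ge 4s_i$ into a single set (likewise for $s_i\ge 4r_i$). Its test captures their entire contribution, since then $(\sqrt{r(B)}-\sqrt{s(B)})^2\ge r(B)/4$.
\item Bucket the remaining outcomes by sign and dyadic scale of $r_i/s_i-1$, down to scale $d_H$. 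Discard smaller deviations, which carry at most half of $d_H^2$.
\end{itemize}
Each of these $O(\log(1/d_H))$ sets is captured up to a constant by its own test. So some $B$ has $d_H(r^B,s^B)\ge d_H/O(\sqrt{\log(1/d_H)})$, which is enough for the corollary. Your ``generic reduction'' fallback is precisely this missing lemma.
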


\begin{proof}
    The $\Omega(\max(1/(q d_H^2),1/d_H))$ lower bound is tight, already in the $k=2$ case, as it is essentially equivalent to \Cref{cor:ampEstLower}. The upper bound also follows from a reduction to the binary ($k=2$) case via~\cite{pensia2022CommConstrainedHypTesting}, as inspired by~\cite[Appendix~C]{belovs2019QAlgClassProbDist}. Indeed, \cite[Corollary 3.4]{pensia2022CommConstrainedHypTesting} states that there is always a $B\subset[k]$ such that the binary distributions $r^B,\,s^B\in\Delta_2$ defined as $p^B_1=\sum_{i\in B}p_i,\,p^B_2=\sum_{i\in [k]\setminus B}p_i$ satisfy
    \begin{align*}
        1\le\frac{d_H(r,s)}{d_H(r^B,s^B)} \le \bigO\left(\min(k,\log(1/d_H))\right),
    \end{align*}
    therefore we can use the amplitude estimation algorithm of \Cref{cor:ampEstLower} applied to $r^B,\,s^B$, to get an algorithm that is asymptotically tight for constant $k$, and in general has at most a logarithmic slack in $d_H$.
\end{proof}

The lower bound of \Cref{cor:phaseEstLower} was previously shown in the unpublished manuscript \cite{tong2021-phase-depth}.

For search, we consider the standard setting in which we are given $x_1, \ldots, x_N \in \{0, 1\}$ with standard query access to them. 
To prove the lower bound, we restrict ourselves to the case when there is either no $i:x_i = 1$ or exactly one such $i$ and the task is to distinguish between the two cases. 

\begin{restatable}{theorem}{grover}
\label{thm:2}
Assume that quantum subroutines may use at most $q$ queries. Then $\Theta\big(\sqrt{N}+\frac{N}{q}\big)$ queries are necessary and sufficient to distinguish between the cases when there is no $i:x_i = 1$ versus exactly one such~$i$.
\end{restatable}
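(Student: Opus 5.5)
The upper bound is the usual partition-and-Grover strategy. Split $[N]$ into $\lceil N/q^2\rceil$ blocks of size at most $q^2$ and run Grover search with $\bigO(q)$ queries on each block, repeated a constant number of times per block (or with exact/fixed-point amplitude amplification so that one run per block suffices). This uses $\bigO(\frac{N}{q^2}\cdot q)=\bigO(\frac{N}{q})$ queries. When $q\ge\sqrt N$ we ignore the restriction and run plain Grover with $\bigO(\sqrt N)$ queries. To get overall constant error with only constant repetition per block, we will use one-sided error: any candidate $i$ is verified classically with a single query. Since we only need to decide whether a marked item exists, a final union bound is unnecessary. For a fixed marked item $i$, the block containing $i$ finds it with constant probability. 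Together this gives $\bigO(\sqrt N+N/q)$.

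The lower bound is of order $\sqrt N$ from the unrestricted BBBV/adversary bound, so the work is to show $\Omega(N/q)$ when $q\le\sqrt N$. I would mimic the progress-measure framework behind \Cref{thm:classification}. Consider the distribution over measurement transcripts when the input is $x=0^N$ and when the input is $x=e_i$ with $i$ uniformly random. A bounded-error algorithm must make these transcript distributions have constant total variation distance, hence constant Hellinger distance on average over $i$. The key steps are as follows. (1) Write the transcript for input $e_i$ as a product over subroutines of conditional outcome probabilities. Use subadditivity of squared Hellinger distance (or the fidelity product rule) under adaptive composition: $1-F(P_0,P_i)\le\sum_t \mathbb E_{\text{transcript prefix}\sim P_0}\bigl[1-F_t\bigr]$, where $F_t$ is the fidelity between the final states of subroutine $t$ on inputs $0^N$ and $e_i$ given the same prefix. (2) Bound each term by the BBBV hybrid argument. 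If $m_{t,j}$ denotes the total query magnitude on index $j$ over the $q_t\le q$ queries of subroutine $t$ run on $0^N$, then $\nrm{\psi_t^{0}-\psi_t^{e_i}}\le 2\sum_{k}\sqrt{m^{(k)}_{t,i}}$, so $1-F_t\lesssim \bigl(\sum_k\sqrt{m_{t,i}^{(k)}}\bigr)^2\le q_t\, m_{t,i}$ by Cauchy--Schwarz. (3) Average over $i$. Since $\sum_i m_{t,i}=q_t$, we get $\mathbb E_i[1-F_t]\lesssim q_t^2/N\le q\,q_t/N$. Summing over $t$ gives average infidelity $\lesssim qT/N$, where $T=\sum_t q_t$ is the total number of queries. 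Constant infidelity then forces $T=\Omega(N/q)$.

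The main obstacle is step (1): making the adaptive composition rigorous. The choice of subroutine $t$ depends on previous outcomes, and those outcomes differ between the two hypotheses. I would handle this with the standard chain rule for squared Hellinger distance (or Bhattacharyya coefficient) of adaptively generated sequences. The expectation in that chain rule is over a geometric mixture, not over $P_0$ itself, so I would instead bound via $\sqrt{P_0P_i}$ recursively. Alternatively, I would argue with total variation and the triangle inequality over hybrids. Here hybrid $t$ answers the first $t$ subroutines with $e_i$ and the rest with $0^N$; consecutive hybrids differ by at most $\mathbb E_{P_0\text{-prefix}}\nrm{\psi^0_t-\psi^{e_i}_t}$, which yields only $\sum_t\sqrt{q_t^2/N}$, too weak. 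So the fidelity/Hellinger route, which is quadratic in the per-block distance, is essential. This is exactly the ``progress measure on transcript distributions'' the paper advertises, and I expect its proof of \Cref{thm:classification} supplies this lemma, which I would reuse here.
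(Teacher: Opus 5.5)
\textbf{Gap: step (1) is not established.} Your upper bound and steps (2)--(3) are fine; step (3) is the paper's per-history estimate $\sum_i\|\psi_i^{(h)}-\psi_0^{(h)}\|^2\le 4r_h^2$. But the whole lower bound rests on step (1), which you flag and then leave open.

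The exact chain rule for the Bhattacharyya coefficient weights a prefix $h$ by $\sqrt{P_0(h)P_i(h)}$, or at best by $\frac{P_0(h)+P_i(h)}{2}$. This weight depends on $i$, and it is not dominated by $O(P_0(h))$ when $P_i(h)\gg P_0(h)$. So in step (3) you cannot pull the weight out of the sum over $i$ and use $\sum_i m_{t,i}(h)=q_t(h)$ history by history. Saying you would ``bound via $\sqrt{P_0P_i}$ recursively'' does not address this.

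Nor does the proof of \Cref{thm:classification} supply the lemma you want. There the per-history bound $r_h\epsilon$ is uniform over histories and inputs, so mixture weights are harmless. Search has no such uniform bound; the distances are small only on average over $i$.

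\textbf{The missing idea.} The paper works with the triangular-discrimination measure $M$ and keeps the factor $1-M_t(h)$ in the one-block inequality of \Cref{lem:cond}:
\[
M_{t+1}(h)-M_t(h)\le\bigl(1-M_t(h)\bigr)\|\psi_0-\psi_1\|^2 ,
\]
applied to the pair $(e_i,0^N)$. Multiplying by the prefix weight $\frac{P_0(h)+P_i(h)}{2}$ turns $1-M_t(h)=4a(1-a)$ into exactly the harmonic mean
\[
\frac{2P_0(h)P_i(h)}{P_0(h)+P_i(h)}\le 2P_0(h),
\]
which depends only on the all-zero transcript distribution. Summing over $t$ gives
\[
M(P_0,P_i)\le 2\sum_t\mathbb{E}_{h\sim P_0}\|\psi_i^{(h)}-\psi_0^{(h)}\|^2 .
\]
Since $H^2\le M\le 2H^2$, this is precisely your step (1) up to a constant. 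Summing over $i$ with the BBBV bound then yields $N/9\le 8qL$, as in the paper. With this lemma inserted, your argument goes through as planned.
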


This bound was first proved by Sun and Zheng \cite{sun2019-hybrid}; here we obtain it through the same framework as the binary-classification bound.

The third problem that we consider is the two-level AND-OR tree.
\[
F(x) = \bigwedge_{i = 1}^{m} \bigvee_{j = 1}^{n} x_{ij},
\qquad x \in \{0, 1\}^{m \times n},
\]
where $m, n \geq 2$ are integers. This models a nested search problem.

We tightly characterize the complexity of $F$ in the hybrid query model.

\begin{restatable}{theorem}{andor}
\label{thm:andor-main}
Assume that quantum subroutines may use at most $q$ queries. 
Then, $\Theta\Big(\sqrt{mn} + \frac{mn}{q}\Big)$ queries are necessary and sufficient to compute $F$.
\end{restatable}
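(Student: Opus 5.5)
The plan is to prove the upper and lower bounds of \Cref{thm:andor-main} separately.

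\textbf{Upper bound.} The term $\sqrt{mn}$ is the standard unrestricted AND-OR bound; it applies when $q\ge\sqrt{mn}$, since $Q(F)=\Theta(\sqrt{mn})$. For $q<\sqrt{mn}$ I will group the $mn$ variables into blocks of size about $q^2$ and aim for $\bigO(mn/q)$ total queries.

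\emph{Case $q\le\sqrt n$.} Split each OR into $n/q^2$ chunks of size $q^2$ and run Grover on each chunk with $q$ queries. An unconditional sequential search then costs $n/q$ per OR and $mn/q$ overall. Error reduction is needed to keep the total failure probability small. I would follow the strategy of \cite{murcca2024two}: randomize the order in which chunks are processed, and use a search that finds a marked item with constant probability and verifies it classically. Rows whose OR equals 1 then cost $\bigO(n/q)$ in expectation. A row whose OR equals 0 is exhausted, and at that point the algorithm stops.

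\emph{Case $\sqrt n<q<\sqrt{mn}$.} Group about $q^2/n$ rows into a super-block. Evaluate the AND of ORs on each super-block with the optimal $\bigO(q)$-query coherent algorithm, amplified to small error. Then run an outer classical AND over the $mn/q^2$ super-blocks, for $\bigO(mn/q)$ queries up to log factors from amplification. To remove those log factors I would again rely on \cite{murcca2024two}, which I expect gives exactly $\Theta(mn/q)$.

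\textbf{Lower bound.} The term $\Omega(\sqrt{mn})$ follows from the unrestricted quantum bound. For $\Omega(mn/q)$ I will use a hard distribution with $F$ balanced. Under the distribution $\mu_1$, every row contains exactly one 1 at a uniformly random position. The distribution $\mu_0$ is the same except that one uniformly random row is all-zero. Distinguishing them requires locating the empty row.

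I will track two progress measures along the transcript:
\begin{itemize}
\item a \emph{classical} measure: the number of rows in which the algorithm has already observed a 1, which certifies that the row is nonempty;
\item a \emph{quantum} measure: within each coherent block of $q$ queries, a compressed-oracle or transcript-distinguishability quantity, as in the framework behind \Cref{thm:2}, that bounds how much information about the positions of 1s can be gained.
\end{itemize}

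The key claim is that a block of $q$ queries distributed over the rows, with $q_i$ queries in row $i$, finds a 1 in row $i$ with probability $\bigO(q_i^2/n)$. Summed over rows, it changes the distinguishability between $\mu_0$ and $\mu_1$ by at most $\bigO(\sum_i q_i^2/n)$. This is a Grover-type bound applied per row, using the product structure of $\mu_1$ over rows.

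Within a block, $\sum_i q_i\le q$. To make progress, the algorithm must either certify many rows classically or obtain quantum advantage on the remaining uncertified rows. The empty row is uniform among the rows not yet certified. Finding it therefore requires certifying a constant fraction of the $m$ rows, or else searching over uncertified rows quantumly. Searching over uncertified rows is itself a search over $mn$ cells, and \Cref{thm:2} limits it to progress $\bigO(q/(mn))$-fraction per query.

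The plan is to define a potential $\Phi$ equal to the number of certified rows plus $m$ times the quantum distinguishability advantage. I will then show that each block increases $\mathbb{E}\Phi$ by at most $\bigO(q\cdot\max(1,q/n)\cdot\ldots)$, and normalize this so that each query increases $\Phi$ by $\bigO(q/n)$ in expectation. Success requires $\Phi=\Omega(m)$, which gives $T=\Omega(mn/q)$.

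\textbf{Main obstacle.} The hard step is making the two measures compose across blocks. Measurements reveal partial classical information that makes the row distribution non-product, and the algorithm can condition later blocks adaptively on it. I would first condition on the classical transcript. Given that transcript, the posterior is still a product over uncertified rows, with the 1 uniform over the unqueried cells. I would then apply a hybrid-argument bound to each block under this posterior.

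The second difficulty is the regime $q>\sqrt n$. There a single block can certify about $q^2/n$ rows, and the convexity bound $\sum_i q_i^2/n$ must be capped by $q/\sqrt n$ per row. I would handle this by bounding progress per block as $\min\left(\sum_i q_i^2/n,\ \#\text{rows touched}\right)$ and optimizing. That yields certification of $\bigO(q^2/n)$ rows per block, hence $\bigO(q/n)$ per query, matching the target.
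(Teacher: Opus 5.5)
\textbf{Upper bound.} Your upper bound is essentially a deferral to \cite{murcca2024two}, which is also how the paper credits it. The paper removes the log factors with the noisy formula-evaluation result of \cite{andor-frpu}, applied to groups of size $s=\Theta(\min\{q^2,mn\})$. In your case $q\le\sqrt n$, if ``exhausted'' means one unsuccessful pass, then a row with OR $=1$ is misjudged with constant probability. Repeat-until-found under a global $\bigO(mn/q)$ budget, together with Markov's inequality, fixes this.

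\textbf{Lower bound: first two gaps.} The genuine gap is in the lower bound.

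First, the claim that the posterior given the classical transcript is a product over uncertified rows, with the $1$ uniform over unqueried cells, is false. A block queries in superposition, so there is no set of ``queried cells.'' Its outcome probabilities contain interference terms such as $\Re(\alpha\bar\beta)(-1)^{x_{1a}+x_{2b}}$, which correlate different rows and bias positions within a row. So you cannot restart each block as a fresh product-distribution search.

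Second, suppose you make the per-row bound rigorous, by refining the argument for Theorem~\ref{thm:2} with query magnitudes and fixing the other rows. The magnitudes are then those of the input $x_0$ on which row $i$ is empty, and you get only $C_i(x_0)=\Omega(n/q)$. Each input in $A$ has a single empty row, so summing over $i$ bounds the cost of no input by $mn/q$. What is needed is the bound for the inputs $x_j\in B$, namely $\frac1n\sum_j C_i(x_j)=\Omega(n/q)$ as in Lemma~\ref{lem:one-block}. Only for $y\in B$ is every row $i$ the relevant row of some pair $(y^{(i)},y)\in R$, so that averaging over $R$ assembles the row bounds into $\sum_i C_i(y)=C(y)$. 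This transfer is the real difficulty: on $x_j$ the algorithm may learn the position of the $1$ from a measurement and stop querying row $i$. Your sketch neither identifies nor handles this.

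\textbf{Lower bound: the potential does not close.} ``Success requires $\Phi=\Omega(m)$'' holds only through the quantum term. For $q>\sqrt n$ the optimal algorithm evaluates $q^2/n$ rows coherently and certifies essentially no row classically. Yet your per-block bound for that term is left as $\bigO(q\cdot\max(1,q/n)\cdot\ldots)$. The appeal to Theorem~\ref{thm:2} (``a search over $mn$ cells'') is not a reduction, since detecting an empty row is not locating a unique marked cell.

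\textbf{What the paper does instead.} It passes to the noisy-oracle model with $\rho=1/(2q)$. The restart simulation gives $C_+\le 2L$, and classical information now arrives one query at a time. For each block $i$ and each fixing of the other rows, it tracks the capped progress $\min\{\lambda_{S_h}\nrm{\phi_h}^2+4(1-\lambda_{S_h})\nrm{\Pi^{\perp}_{v_{S_h}}\ket{\phi_h}}^2,\ \tfrac14\nrm{\phi_h}^2\}$, where $S_h$ is the set of positions not yet excluded by classical $0$-answers. Lemma~\ref{lem:andor-position} bounds the increment per noisy query by $\bigO(1/(\rho n))$ times the weight $\nrm{\Pi_S\ket{\phi_{s,\nu}}}^2$ of the $B$-labels alone. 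Its second case treats a component sitting mostly on label $0$, where the $A$-input queries that position and the $B$-inputs do not. There the raw form already exceeds the cap, so the capped classical outcomes lose at least $\frac14\nrm{\phi}^2$, which pays for the quantum gain. Some argument playing this role is what your proposal lacks.
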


The upper bound in this theorem is due to \cite{murcca2024two}. The lower bound
is shown in this paper for the first time.

{\bf Methods.}
We consider the transcript consisting of the sequence of classical measurement outcomes produced by the successive quantum blocks. For each input, a hybrid algorithm induces a probability distribution on the space of transcripts. If the algorithm distinguishes two inputs with bounded error, the corresponding transcript distributions must have constant statistical distance. We capture this difference by a $\chi^2$-type progress measure.

For binary classification and search, we establish a one-block inequality relating the distinguishability generated within a quantum block (measured by the distance between the final quantum states on the two inputs) to the resulting increase in the progress measure for the transcript distributions. Iterating this inequality yields both lower bounds through the same general argument.

The AND-OR lower bound requires a more intricate analysis. We combine the transcript-based measure with ideas from the hybrid compressed-oracle framework of Hamoudi et al. \cite{hamoudi2024nisq}, including a reduction to an alternative model of hybrid computation and separate progress measures tracking the classical and quantum information accumulated by the algorithm. Unlike in their work, however, we incorporate these ingredients into an adversary-style argument rather than using them to track a compressed oracle. This combination may also be useful for other problems to which the compressed-oracle method does not directly apply.

\section{Computational model}

{\bf Hybrid algorithms.}
We consider \emph{hybrid quantum--classical algorithms} in the query model. 
The computation is classical, but it may invoke quantum subroutines, each of which is allowed to make at most $q$ oracle queries. At the end of each quantum subroutine,
all quantum registers are measured and the measurement results are passed to the classical algorithm.

Formally, such an algorithm is represented by a finite rooted tree. 
Each internal vertex $v$ has a corresponding quantum algorithm $A_v$. The algorithm $A_v$ makes $r_v\leq q$ queries to the oracle and ends with a measurement. The outgoing edges from $v$ are labelled by the possible measurement outcomes of $A_v$.
Each leaf of the tree carries a classical output value that represents the final result of the computation.

The computation proceeds as follows.
Execution begins at the root of the tree.
At a vertex $v$, the algorithm executes the quantum subroutine $A_v$ and observes its measurement outcome.
The computation then follows the edge corresponding to this outcome to the appropriate child vertex.
This process continues until a leaf is reached, at which point 
its label is output as a result of the computation.

Let $T$ be the maximum number of quantum subroutines on a root-to-leaf path. We may pad shorter paths with zero-query dummy subroutines, so that every execution has $T$ stages. 

The complexity of the algorithm is
\[
L=\max_P\sum_{v\in P}r_v,
\]
where the sum is over the internal vertices $v$ on $P$, and the maximum is over root-to-leaf computation paths $P$ that can occur on a valid input.

{\bf Computational problems.}
We consider computational problems of several types. The first type are Boolean functions $F(x_1, \ldots, x_n)$. Then, an oracle call applies the standard phase
oracle $O_x\ket{i, v, w} = (-1)^{v x_i} \ket{i, v, w}$ where $i\in [n]$ is the index to be queried, $v\in\{0, 1\}$ controls whether a query is applied and $w$ denotes the algorithm's registers that do not affect the query.

The second type are problems where the input is an unknown unitary $U$ and we have to determine if $U$ has a certain property or estimate some quantity that depends on $U$.
Then, an oracle call is an application of controlled-$U$ or controlled-$U^{\dagger}$.

The hybrid query complexity of a computational problem $F$, denoted $Q(F;q)$, is the smallest complexity of a hybrid quantum-classical algorithm that solves $F$ with 
error at most $1/3$ on every valid input, using quantum subroutines with at most $q$ queries.

We also use $Q(F)$ to denote the unrestricted bounded-error quantum query complexity.

\section{Transcripts and progress measure}
\label{sec:general}

In this section, we describe the framework for the proofs of Theorems \ref{thm:classification} and \ref{thm:2} and prove a key lemma which serves as a basis for proofs of both theorems.
Given this lemma, both theorems can be deduced by quite short arguments.

\textbf{Transcripts and the global progress measure.}
Let $y_t$ be the measurement outcome of the $t$-th quantum subroutine executed, and let $Y_t=(y_1,\ldots,y_t)$. We refer to $Y_t$ as the (classical) transcript of the hybrid algorithm. Let
\[
p_h := \Pr[Y_t = h \mid U_0],\qquad
q_h := \Pr[Y_t = h \mid U_1].
\]
If the oracle is $U_0$ with probability 1/2 and $U_1$ with probability 1/2, the overall probability 
of $Y_t = h$ is $\frac{p_h + q_h}{2}$.

We write $p^{(t)}=(p_h)_h$ and $q^{(t)}=(q_h)_h$ for the two distributions of $Y_t$. For any two probability distributions $p,q$ on the same finite set, we define
\begin{equation}
\label{eq:defM_simple}
M(p,q):=\sum_z\frac{(p(z)-q(z))^2}{2(p(z)+q(z))},
\end{equation}
where a summand is taken to be zero if $p(z)=q(z)=0$. We use
$ M_t:=M\bigl(p^{(t)},q^{(t)}\bigr)$ as our measure of progress.

If we choose the oracle uniformly at random from $\{U_0, U_1\}$ and run the algorithm for $t$ blocks, then $M_t$ is equal to the expectation of
$\bigl(\frac{p_{h} - q_{h}}{p_{h} + q_h}\bigr)^2$.

A standard Cauchy--Schwarz bound relates $M_t$ to the total variation distance:
\[
\mathrm{TV}(p^{(t)}, q^{(t)})
= \frac{1}{2} \sum_h |p_h - q_h|
\le \sqrt{M_t}.
\]
The final output is a function of the final transcript $Y_T$. Therefore, the total variation distance between the transcript distributions is at least the total variation distance between the corresponding output distributions. In particular, if the algorithm distinguishes $U_0$ from $U_1$ with error at most $1/3$, then
$\operatorname{TV}\bigl(p^{(T)},q^{(T)}\bigr)\ge \frac13$,
and hence $M_T\ge 1/9$.

\medskip
\noindent\textbf{Conditioning on a history.}
Fix integers $t \le j$ and a transcript $h$ of length $t$ with $p_h + q_h > 0$. 
Let $h'$ be a length-$j$ transcript and write $h \vdash h'$ if $h$ is a prefix of $h'$.
Assuming that the oracle $U$ is chosen to be $U_0$ with probability 1/2 and $U_1$ with probability 1/2, we define the conditional probabilities
\[
p_{h'}(h) := \Pr[Y_j = h',\,U = U_0 \mid Y_t = h], \mbox{~}
q_{h'}(h) := \Pr[Y_j = h',\,U = U_1 \mid Y_t = h].
\]
If $h \not\vdash h'$, then $p_{h'}(h) = q_{h'}(h) = 0$. If $h \vdash h'$, then because of the uniform
distribution on $U$ we have
\[
p_{h'}(h) = \frac{p_{h'}}{p_h + q_h},\qquad
q_{h'}(h) = \frac{q_{h'}}{p_h + q_h}.
\]
In particular, this means
\begin{equation}
\label{eq:equal}
\frac{p_{h'}(h) - q_{h'}(h)}{p_{h'}(h) + q_{h'}(h)} =
\frac{p_{h'} - q_{h'}}{p_{h'} + q_{h'}}.
\end{equation}

We define the conditional progress measure
\[
M_j(h) := \mathbb{E}_{h':h \vdash h'}\left[\left(
\frac{p_{h'}(h) - q_{h'}(h)}{p_{h'}(h) + q_{h'}(h)}
\right)^{\!\!2}\right],
\]
where the expectation is with respect to the probability distribution $\Pr[Y_j = h' \mid Y_t = h]$.
Then the (unconditional) progress measure satisfies 
\[
M_j = \mathbb{E}_{h'}\left[\left(\frac{p_{h'} - q_{h'}}{p_{h'} + q_{h'}}\right)^{\!\!2}\right] = \sum_h \Pr[Y_t = h] M_j(h) = \sum_h \frac{p_h + q_h}{2} M_j(h),
\]
with the second equality following from (\ref{eq:equal}).
Consequently, to upper bound $M_{t + 1} - M_t$ it suffices to upper bound
$M_{t + 1}(h) - M_t(h)$ for every $h$ and then average with weights $\tfrac{p_h + q_h}{2}$.
The following bound retains a factor $1 - M_t(h)$ that will also be needed in the search argument.

\begin{lemma}
\label{lem:cond}
Let $h$ be a transcript of length $t < T$ with $p_h + q_h > 0$, and consider the subroutine selected after $h$. Let $\ket{\psi_0}$ and $\ket{\psi_1}$ be its states immediately before the final measurement when run with $U_0$ and $U_1$, respectively. Then
\begin{equation}
M_{t + 1}(h) - M_t(h) \leq (1 - M_t(h)) \nrm{\ket{\psi_0} - \ket{\psi_1}}^2.
\label{eq:increment_final}
\end{equation}
\end{lemma}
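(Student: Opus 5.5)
The plan is to reduce everything to a two-hypothesis computation with a known prior, and then use a Cauchy--Schwarz step followed by monotonicity of fidelity under measurement.

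\textbf{Step 1: compute $M_t(h)$.} Conditioned on $Y_t=h$, the oracle is $U_0$ with posterior probability $a:=\frac{p_h}{p_h+q_h}$ and $U_1$ with probability $b:=\frac{q_h}{p_h+q_h}$, where $a+b=1$. At level $t$ the only transcript $h'$ with $h\vdash h'$ is $h$ itself. Hence, by (\ref{eq:equal}),
\[
M_t(h)=(a-b)^2, \qquad\text{so}\qquad 1-M_t(h)=4ab .
\]

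\textbf{Step 2: compute $M_{t+1}(h)$.} The length-$(t+1)$ extensions of $h$ are $h'=(h,y)$, where $y$ ranges over the outcomes of the subroutine $A_v$ chosen after $h$. Let $P_i(y)$ be the probability of outcome $y$ when the final measurement is applied to $\ket{\psi_i}$ (this works for a projective measurement or a general POVM). Then $p_{h'}(h)=aP_0(y)$ and $q_{h'}(h)=bP_1(y)$. Since the expectation in $M_{t+1}(h)$ is taken with weights $aP_0(y)+bP_1(y)$, we get
\[
M_{t+1}(h)=\sum_y\frac{(aP_0(y)-bP_1(y))^2}{aP_0(y)+bP_1(y)} .
\]
Using $(x-z)^2=(x+z)^2-4xz$, this becomes $1-4ab\,S$ with
\[
S:=\sum_y \frac{P_0(y)P_1(y)}{aP_0(y)+bP_1(y)} .
\]
Therefore
\[
M_{t+1}(h)-M_t(h)=4ab\,(1-S)=(1-M_t(h))(1-S),
\]
and it suffices to prove $S\ge 1-\nrm{\ket{\psi_0}-\ket{\psi_1}}^2$.

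\textbf{Step 3: lower-bound $S$.} This is the main step, because the pointwise weighted harmonic-mean bound goes the wrong way. I would instead apply Cauchy--Schwarz globally:
\[
\Bigl(\sum_y\sqrt{P_0(y)P_1(y)}\Bigr)^2\le S\cdot\sum_y\bigl(aP_0(y)+bP_1(y)\bigr)=S .
\]
So $S$ is at least the squared classical fidelity (Bhattacharyya coefficient) of the two outcome distributions. By monotonicity of fidelity under measurement, this is at least $|\langle\psi_0|\psi_1\rangle|^2$. Moreover,
\[
|\langle\psi_0|\psi_1\rangle|\ge \mathrm{Re}\langle\psi_0|\psi_1\rangle = 1-\tfrac12\nrm{\ket{\psi_0}-\ket{\psi_1}}^2 .
\]
If this quantity is negative the claim is trivial, since $S\ge 0$ and then $1-\nrm{\ket{\psi_0}-\ket{\psi_1}}^2<0$. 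Otherwise, squaring and using $(1-x/2)^2\ge 1-x$ gives
\[
S\ge 1-\nrm{\ket{\psi_0}-\ket{\psi_1}}^2 ,
\]
which yields (\ref{eq:increment_final}).
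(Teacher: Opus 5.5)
Your proof is correct and follows essentially the same route as the paper. Both compute $M_{t+1}(h)=1-4a(1-a)\sum_y P_0(y)P_1(y)/w_y$ from the posterior, apply Cauchy--Schwarz against the weights $w_y$, and use $|\langle\psi_0|\psi_1\rangle|\le\sum_y\sqrt{P_0(y)P_1(y)}$; the paper proves that last inequality directly via Cauchy--Schwarz on each POVM element instead of citing monotonicity of fidelity. The only other difference is cosmetic: the paper closes with $1-|\langle\psi_0|\psi_1\rangle|^2\le 2-2\operatorname{Re}\langle\psi_0|\psi_1\rangle$, where you square $1-\tfrac12\|\ket{\psi_0}-\ket{\psi_1}\|^2$.
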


\begin{proof}
Fix a history $h$ with $p_h+q_h>0$, and let
\[
a=\frac{p_h}{p_h+q_h}
\]
be the posterior probability that the oracle is $U_0$. Thus
\[
M_t(h)=(2a-1)^2,
\qquad
1-M_t(h)=4a(1-a).
\]
If $a\in\{0,1\}$, the oracle is already determined by $h$, so $M_{t+1}(h)=M_t(h)=1$ and the claimed inequality holds. We therefore assume $0<a<1$.

Write $|\psi_0\rangle$ and $|\psi_1\rangle$ for the normalized states immediately before the next block’s final measurement under the two oracles, with $h$ fixed. Let $r_i$ and $s_i$ be their respective probabilities of measurement outcome $i$. Conditioned on $h$, this outcome has probability
\[
w_i=ar_i+(1-a)s_i.
\]
By the definition of conditional progress,
\[
M_{t+1}(h)
=\sum_i\frac{\bigl(ar_i-(1-a)s_i\bigr)^2}{w_i}
=1-4a(1-a)\sum_i\frac{r_i s_i}{w_i},
\]
where outcomes with $w_i=0$ are omitted.

Let $\{E_i\}$ be the POVM describing the common final measurement. Applying Cauchy–Schwarz first to each measurement operator and then to the sum over outcomes gives
\[
|\langle\psi_0|\psi_1\rangle|
=\left|\sum_i\langle\psi_0|E_i|\psi_1\rangle\right|\\
\le \sum_i\sqrt{r_i s_i}\\
=\sum_i\sqrt{w_i}\sqrt{\frac{r_i s_i}{w_i}}\\
\le \left(\sum_i\frac{r_i s_i}{w_i}\right)^{1/2},
\]
using $\sum_i w_i=1$. Consequently,
\[
1-M_{t+1}(h)
\ge
\bigl(1-M_t(h)\bigr)|\langle\psi_0|\psi_1\rangle|^2.
\]
Finally, since both states are normalized,
\[
1-|\langle\psi_0|\psi_1\rangle|^2
\le 2-2\operatorname{Re}\langle\psi_0|\psi_1\rangle
=\bigl\||\psi_0\rangle-|\psi_1\rangle\bigr\|^2.
\]
Combining these inequalities yields
\[
M_{t+1}(h)-M_t(h)
\le
\bigl(1-M_t(h)\bigr)
\bigl\||\psi_0\rangle-|\psi_1\rangle\bigr\|^2,
\]
as required. 
\end{proof}

\section{Proofs of Theorems \ref{thm:classification} and \ref{thm:2}}

\classification*

\begin{proof}
For each quantum subroutine, we have

\begin{lemma}
\label{lem:hybrid}
For any $r$-query quantum algorithm (with all non-query operations independent of the oracle),
\[
\bigl\|\ket{\psi_0} - \ket{\psi_1}\bigr\| \le r\bigl\|U_0 - U_1\bigr\|
\le r \epsilon .
\]
\end{lemma}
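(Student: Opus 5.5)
This is the standard hybrid argument: interpolate between the two runs one query at a time and use the triangle inequality. Write the $r$-query subroutine as
\[
\ket{\psi_b} = V_r O_b^{(r)} V_{r-1} \cdots V_1 O_b^{(1)} V_0 \ket{\mathrm{init}}, \qquad b\in\{0,1\},
\]
where the $V_k$ are the oracle-independent unitaries. Each $O_b^{(k)}$ is a controlled $U_b$ or controlled $U_b^\dagger$, possibly tensored with the identity on workspace. Non-unitary intermediate steps can be deferred by the usual purification. Mid-block measurements can likewise be postponed to the final measurement without changing the query count.

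For $k=0,\dots,r$, define the hybrid state $\ket{\phi_k}$: it applies $U_1$-type oracles in the first $k$ query slots and $U_0$-type oracles in the remaining slots. Then $\ket{\phi_0}=\ket{\psi_0}$ and $\ket{\phi_r}=\ket{\psi_1}$. The states $\ket{\phi_{k-1}}$ and $\ket{\phi_k}$ differ only in slot $k$. Since everything after slot $k$ is unitary and norm-preserving,
\[
\bigl\|\ket{\phi_{k-1}}-\ket{\phi_k}\bigr\| = \bigl\|(O_0^{(k)}-O_1^{(k)})\ket{\chi_k}\bigr\| \le \bigl\|O_0^{(k)}-O_1^{(k)}\bigr\|,
\]
where $\ket{\chi_k}$ is the normalized state just before slot $k$. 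Summing over $k$ with the triangle inequality gives $\|\ket{\psi_0}-\ket{\psi_1}\|\le \sum_k \|O_0^{(k)}-O_1^{(k)}\|$.

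It remains to bound each operator norm by $\|U_0-U_1\|$. A controlled unitary is block diagonal, $\ketbra{0}{0}\otimes I + \ketbra{1}{1}\otimes U_b$, so the difference of the two controlled versions is $\ketbra{1}{1}\otimes (U_0-U_1)$, whose norm is $\|U_0-U_1\|$. Tensoring with the identity on workspace does not change the norm. For $U_b^\dagger$ queries we use $\|U_0^\dagger-U_1^\dagger\| = \|(U_0-U_1)^\dagger\| = \|U_0-U_1\|$. Hence each term is at most $\epsilon$, and the total is at most $r\epsilon$.

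No step is a real obstacle. The only care needed is the bookkeeping: the non-query operations must be identical across the hybrids, which holds by the hypothesis that they are oracle-independent, and the adjoint and controlled variants must be handled as above.
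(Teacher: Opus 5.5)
Your proposal is correct and is the same standard hybrid argument the paper invokes: replace the $r$ oracle calls one at a time, bound each replacement by $\|U_0-U_1\|$, and sum via the triangle inequality. Your treatment of the controlled and adjoint query variants supplies details the paper leaves implicit.
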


The lemma follows by a standard hybrid argument in which we replace the $r$ oracle calls one at a time and apply the triangle inequality. Each replacement changes the state by at most $\|U_0-U_1\|$.

Let $L$ be the total number of queries in the worst-case and $r_h\le q$ be the number of queries in the block following $h$.
Applying \Cref{lem:hybrid} with $r_h$ in place of $r$, and using $1 - M_t(h) \leq 1$ in \Cref{lem:cond}, gives $M_{t + 1}(h) - M_t(h) \leq (r_h \epsilon)^2$. Averaging over histories of length $t$ gives
\[
M_{t + 1} - M_t \leq \epsilon^2 \sum_{|h| = t} \frac{p_h + q_h}{2} r_h^2.
\]
Under the uniform prior, $\sum_{t < T} \sum_{|h| = t} (p_h + q_h)r_h/2$ is the expected total number of queries and is bounded by $L$. Since $M_0 = 0$ and $r_h^2 \leq qr_h$, summing the preceding inequality gives
\[
M_T \leq \epsilon^2 \sum_{t < T} \sum_{|h| = t} \frac{p_h + q_h}{2} r_h^2 \leq q \epsilon^2 L.
\]
Error at most $1/3$ requires $M_T\ge 1/9$, and hence
\[
L=\Omega\!\left(\frac{1}{q\epsilon^2}\right).
\]
Since every hybrid algorithm is also an unrestricted quantum query algorithm, the standard hybrid argument applied to the entire computation gives $L=\Omega(1/\epsilon)$. Combining the two bounds proves the theorem.\end{proof}

\grover*

\begin{proof}
We use the complement $M' = 1 - M$, with conditional values $M'_j(h) = 1 - M_j(h)$. By \eqref{eq:increment_final},
\[
M'_{t + 1}(h) \geq \left(1 - \nrm{\ket{\psi_0} - \ket{\psi_1}}^2\right)M'_t(h).
\]
For search, let $p^{(i)}_h$ be the probability of transcript $h$ on the input whose unique $1$ is at position $i$, and let $q_h$ be its probability on the all-zero input. Write $p^{(i)}$ and $q$ for these distributions at time $t$, and put
\[
M_t^{\prime(i)} = 1 - M(p^{(i)}, q) = \sum_{|h| = t} \frac{2 p^{(i)}_h q_h}{p^{(i)}_h + q_h}.
\]
We use $\sum_{i = 1}^N M_t^{\prime(i)}$ as the progress measure. Initially $M_0^{\prime(i)} = 1$. If the error is at most $1/3$, then $\mathrm{TV}(p^{(i)}, q) \geq 1/3$ at termination, so $M_T^{\prime(i)} \leq 8/9$.

Similarly to the proof of Theorem \ref{thm:classification}, let $L$ be the total number of queries in the worst-case.
Fix $t < T$ and a history $h$ of length $t$, and consider the subroutine selected after $h$, which makes $r_h$ queries. Let $\ket{\psi_i^{(h)}}$ and $\ket{\psi_0^{(h)}}$ be its final states before measurement on the input with unique $1$ at $i$ and on the all-zero input, respectively. The argument in \cite[Section~3.1]{andor-bbbv} gives
\[
\sum_{i = 1}^N \nrm{\ket{\psi_i^{(h)}} - \ket{\psi_0^{(h)}}}^2 \leq 4 r_h^2.
\]
For each $i$, apply \Cref{lem:cond} to the corresponding input pair and average over histories $h$ with weights $(p_h^{(i)} + q_h)/2$. Summing over $i$ gives
\[
\sum_i \left(M_t^{\prime(i)} - M_{t + 1}^{\prime(i)}\right) \leq \sum_i \sum_{|h| = t} \frac{2 p^{(i)}_h q_h}{p^{(i)}_h + q_h} \nrm{\ket{\psi_i^{(h)}} - \ket{\psi_0^{(h)}}}^2.
\]
Using $2 p^{(i)}_h q_h/(p^{(i)}_h + q_h) \leq 2 q_h$ and the bound on the sum of squared distances, we obtain
\[
\sum_i \left(M_t^{\prime(i)} - M_{t + 1}^{\prime(i)}\right) \leq 8 \sum_{|h| = t} q_h r_h^2.
\]
The sum $\sum_{t < T} \sum_{|h| = t} q_h r_h$ is the expected query count on the all-zero input and is at most $L$. Summing over $t$ and using $r_h^2 \leq qr_h$ gives
\[
\frac{N}{9} \leq \sum_i \left(M_0^{\prime(i)} - M_T^{\prime(i)}\right) \leq 8 \sum_{t < T} \sum_{|h| = t} q_h r_h^2 \leq 8 qL.
\]
Thus $L = \Omega(N/q)$. The standard unrestricted quantum lower bound $\Omega(\sqrt N)$ \cite{andor-bbbv} completes the proof.
\end{proof}

\section{Two-level AND-OR trees}
\label{sec:andor-noisy}

Consider the function
\[
F(x) = \bigwedge_{i = 1}^{m} \bigvee_{j = 1}^{n} x_{ij},
\qquad x \in \{0, 1\}^{m \times n},
\]
where $m, n \geq 2$ are integers. The $n$ bits with first index $i$ form OR block $i$. 

We tightly characterize the complexity of $F$ in the hybrid query model.

\andor*

\begin{proof}
For the upper bound, every subformula of depth at most two on at most $s$ leaves has a quantum algorithm of error at most $1/3$ using $\bigO(\sqrt{s})$ queries \cite[Section~4]{andor-hmw}. For $q$ above an absolute constant, choose an integer $s = \Theta(\min\{q^2, mn\})$ small enough that these evaluations use at most $q$ queries. If $s \leq n$, partition each OR block into groups of $s$ bits, with a possibly smaller final group. If $s > n$, group $\lfloor s/n\rfloor$ complete OR blocks under each AND subformula, again allowing a smaller final group. Their values form an AND-OR formula of depth at most two and size $\bigO(mn/s)$. The noisy classical formula-evaluation result of \cite[Section~6, p.~1017]{andor-frpu} combines these values with $\bigO(mn/s)$ requests, each having independent error at most $1/3$. Implement each request by a fresh quantum evaluation, measuring and discarding all qubits afterwards. This gives $\bigO((mn/s) \sqrt{s}) = \bigO(\sqrt{mn} + mn/q)$ total queries. For smaller $q$, reading every bit classically gives the same asymptotic bound.

We now proceed to proving the lower bound. We use the same sets of difficult inputs as in the standard proof of the lower bound for AND-OR in \cite[Section~5.3]{andor-ambainis}.

Let $B$ consist of the inputs with exactly one $1$ in each OR block. Let $A$ consist of the inputs with one all-zero OR block and exactly one $1$ in every other block. Thus $F$ is $1$ on $B$ and $0$ on $A$. We will show the lower bound for computing $F$ if the input is restricted to being in $A \cup B$.

For $y \in B$, let $y^{(i)}$ be obtained by changing the unique $1$ in block $i$ to $0$. Let $R$ be the set consisting of all pairs $(y^{(i)}, y)$.
We have $|B| = n^m$ and $|R| = m n^m$.
We will show that, if the algorithm makes too few queries, then for some choice of $y$ and $i$ it behaves similarly on $y$ and $y^{(i)}$ and, therefore, is incorrect on one of these two inputs.

\medskip
\noindent\textbf{The noisy oracle.}
A key idea is considering a different model of hybrid computation (from \cite{hamoudi2024nisq}) in which, instead of a limit $q$ on the number of queries in one quantum subroutine, there is a noise with some probability in every query.

The standard quantum query is
\[
O_x \ket{i, j, \nu, w} = (-1)^{\nu x_{ij}} \ket{i, j, \nu, w}
\]
where $(i, j) \in [m] \times [n]$ is the index to be queried, $\nu \in \{0, 1\}$ is a phase bit, and $w$ is the workspace.
For a noisy query, we fix a noise probability $0 < \rho \leq 1$. A call of the noisy oracle applies $O_x$ with probability $1 - \rho$. With probability $\rho$, it measures the index and phase registers in their basis and returns the bit $x_{ij}$ at the measured index. Both alternatives append a flag; the classical alternative also appends the queried value. 

This can be described more formally as follows.
Let $P_{i, j, \nu} = \ket{i, j, \nu} \bra{i, j, \nu} \otimes I$ and $P_i = \sum_{j, \nu} P_{i, j, \nu}$.
On a density matrix $\sigma$, the noisy oracle applies the map
\begin{equation}
\mathcal N_{\rho, x}(\sigma) = (1 - \rho)O_x \sigma O_x^* \otimes \ket{\mathrm Q, \bot} \bra{\mathrm Q, \bot} + \rho \sum_{i, j, \nu} P_{i, j, \nu} \sigma P_{i, j, \nu} \otimes \ket{\mathrm C, x_{ij}} \bra{\mathrm C, x_{ij}}.
\label{eq:andor-channel}
\end{equation}
The appended registers are a flag register and a value register, with bases $\{\ket{\mathrm Q}, \ket{\mathrm C}\}$ and $\{\ket{\bot}, \ket{0}, \ket{1}\}$. 

Let $T$ be the maximum number of calls made by the algorithm on inputs in $A \cup B$, and let $\Sigma$ be its finite output alphabet. On input $x$, let $p_x(o)$ be the probability of output $o \in \Sigma$, and let $C(x)$ be the expected number of calls. We set
\[
C_+ = \frac{1}{|B|} \sum_{y \in B} C(y).
\]

Let $M(p, q)$ be the measure in \eqref{eq:defM_simple} applied to distributions $p, q$ on outputs $\Sigma$, with a zero summand when both probabilities vanish. The main part of our proof is

\begin{lemma}
\label{thm:andor-noisy}
For every algorithm using \eqref{eq:andor-channel} that terminates after at most $T < \infty$ calls on every input in $A \cup B$,
\begin{equation}
\frac{1}{|R|} \sum_{(x, y) \in R} M(p_x, p_y)
\leq \frac{4096}{\rho mn} C_+.
\label{eq:andor-main}
\end{equation}
Consequently, if an accepting-output set $E \subseteq \Sigma$ satisfies $p_y(E) \geq 1 - \epsilon$ for every $y \in B$ and $p_x(E) \leq \epsilon$ for every $x \in A$, where $0 \leq \epsilon < 1/2$, then
\[
C_+ \geq \frac{(1 - 2 \epsilon)^2}{4096} \rho mn.
\]
The same lower bound holds for the maximum number of calls $T$.
\end{lemma}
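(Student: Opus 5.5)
We track, for every pair $(x,y)\in R$, the joint evolution of the algorithm's state under $x$ and under $y$ after each call. The state at that moment is a classical transcript $h$ (including flags and classically revealed values) together with a conditional pure state. As in \Cref{sec:general}, we use the transcript-conditioned progress measure $M_t(x,y)$, computed on the full transcript. At termination it upper bounds $M(p_x,p_y)$, since outputs are a function of the transcript and $M$ is monotone under processing, being an $f$-divergence. Because a call can be classical, the difference between $x$ and $y$ enters in two ways. First, the classical branch can \emph{reveal} the differing bit $(i,j^*)$, where $j^*$ is the position of the $1$ in block $i$ of $y$. Second, the quantum branch can \emph{coherently} accumulate a difference. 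We therefore use two potentials: a classical one, $\Phi^{C}_t$, equal to the probability under $y$ that position $(i,j^*)$ has been classically queried; and a quantum one, $\Phi^{Q}_t=\mathbb E_h\,\nrm{\ket{\psi^{x}_h}-\ket{\psi^{y}_h}}^2$ on histories not yet revealing the difference. We average all three over $R$.

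\textbf{Step 1: one-call increments.} Fix a history and a call. With probability $\rho$ the index is measured. If it hits $(i,j^*)$, which happens with probability $\nrm{P_{i,j^*}\psi}^2$, this raises $\Phi^C$. Otherwise the two states collapse to the same outcome and the quantum difference is partly erased; this decoherence is the key gain. With probability $1-\rho$ the quantum branch adds at most $2\nrm{P_{i,j^*}\psi}$ to the state difference. Summing over $(i,j^*)$ as in \cite{andor-bbbv}, the average over $R$ of $\nrm{P_{i,j^*}\psi}^2$ is $\frac1{mn}$ times the query mass, because a uniformly random $y\in B$ places $j^*$ uniformly. Hence each call raises $\mathbb E_R\Phi^C$ by at most $\rho/(mn)$. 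For $\Phi^Q$, I would write $d_{t+1}\le \sqrt{1-\rho}\,(d_t+2\nrm{P\psi})$ and use $(a+b)^2\le(1+\rho/2)a^2+(1+2/\rho)b^2$. This gives the recursion
\[
\mathbb E\Phi^Q_{t+1}\le (1-\rho/2)\,\mathbb E\Phi^Q_t+\frac{O(1/\rho)}{mn}.
\]
Summing over calls then bounds the accumulated quantum term by $\sum_t \mathbb E\Phi^Q_t\cdot\rho=O(C_+/(\rho mn))$, and each quantum step contributes $O(\rho\,\Phi^Q)$ or $O(1/(\rho mn))$ to $M$.

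\textbf{Step 2: connect to $M$.} Apply \Cref{lem:cond} to each classical measurement event (flag, measured index, value). On history branches where $(i,j^*)$ was revealed, $M$ may reach $1$; their total weight is at most $\mathbb E_R\Phi^C_T\le \rho\, C_+/(mn)\le C_+/(\rho mn)$. On the remaining branches, the increment of $M$ at each classical revelation is bounded via \Cref{lem:cond} by the current squared state distance. Averaging over $R$ and summing gives $O(C_+/(\rho mn))$. Finally, adding the terminal $\Phi^Q_T$ handles the last measurement. Expected calls are taken under $y$, which matches $C_+$; for $x$-side weights I would use $p_x+p_y\ge p_y$ as in the search proof.

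\textbf{Step 3: consequences.} If $E$ separates $B$ from $A$ with error $\epsilon$, then $\mathrm{TV}(p_x,p_y)\ge 1-2\epsilon$ for every $(x,y)\in R$. Hence $M\ge(1-2\epsilon)^2$, and \eqref{eq:andor-main} gives the bound on $C_+$. Since $C_+\le T$, the same bound holds for $T$.

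The main obstacle is Step 1's quantum potential. Decoherence must genuinely damp the accumulated difference by a factor $1-\Omega(\rho)$ per call, and this must survive averaging over histories in which measured outcomes reweight branches differently under $x$ and $y$. I would first try to prove the damping on the averaged squared distance using the fidelity form $1-M'=\ldots$ from \Cref{lem:cond}, treating the measured-index branch as a projective measurement.
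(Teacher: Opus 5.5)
There are two genuine gaps, and both are in Step~1.

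\textbf{The quantum-potential recursion is false.} A classical call in \eqref{eq:andor-channel} measures only the index and phase registers. The workspace stays coherent and the run continues. On an outcome $(i',j',\nu)$ with $(i',j')\neq(i,j^*)$ both inputs return the same bit, so the branch vectors are $\sqrt\rho\,P_{i',j',\nu}\ket{\psi^x_h}$ and $\sqrt\rho\,P_{i',j',\nu}\ket{\psi^y_h}$. Summed over these outcomes, the squared difference $\rho\nrm{(I-P_{i,j^*})(\ket{\psi^x_h}-\ket{\psi^y_h})}^2$ is kept, not erased. Hence $d_{t+1}\le\sqrt{1-\rho}\,(d_t+2\nrm{P_{i,j^*}\psi})$ holds only on the branch where every call was quantum. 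For the total difference it fails: park a difference in the workspace and then query only an index where $x$ and $y$ agree. The additive term is $0$, yet nothing shrinks.

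Nor can you absorb the classical branches with \Cref{lem:cond} at the moment of measurement. That lemma presupposes the state is discarded afterwards, so here it would count the same persistent difference again at every later event (your Step~2). Noise does not erase old differences; it only stops new queries from adding to them coherently. Capturing this needs a potential in which a classical split incurs a loss quadratic in the relevant amplitude, beating the linear cross-term gain of a quantum query.

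The paper builds exactly this on a joint comparison vector, with $\min\{\lambda_S\nrm{\phi}^2+4(1-\lambda_S)\nrm{\Pi^{\perp}_{v_S}\ket{\phi}}^2,\tfrac14\nrm{\phi}^2\}$, where $S$ is the set of positions not yet classically excluded. Let $a$ be the component of $x_s$ at index $(i,s)$ and $b$ the common component of the other labels. A quantum call at $(i,s)$ gains at most $16\nrm{a}\nrm{b}/\sqrt n$. A classical call splits label $s$ off into a branch capped at $\tfrac14$ of its squared norm and loses about $2\nrm{a}^2$. Together these give $(1-\rho)\Delta_{\mathrm Q}^\nu+\rho\Delta_{\mathrm C}\le\frac{32}{\rho n}\nrm{b}^2$ (\Cref{lem:andor-position}).

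\textbf{The averaging claim fails.} The statement that a uniformly random $y$ places $j^*$ uniformly helps only for a state independent of $j^*$. For fixed $y$, the pairs of $R$ containing it are $(y^{(i)},y)$, $i\in[m]$. So $\mathbb E_R\nrm{P_{i,j^*}\ket{\psi^y}}^2$ is $\frac1m$ times the weight that $y$'s own run puts on its own $1$-positions, and nothing makes this a $\frac1n$ fraction of its query mass. For example, an algorithm that runs Grover on one block (possible without classical interruption when $\rho\lesssim 1/\sqrt n$) drives $\nrm{P_{i,j^*}\ket{\psi^y}}^2$ to $\Theta(1)$, so you lose a factor $n$. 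The same issue undermines your weight bound $\rho C_+/(mn)$ for revealed branches in Step~2.

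Switching to $\ket{\psi^x}$ restores independence from $j^*$, but then block $i$ is the zero block of $x$, and you get $\frac1n\mathbb E_{x\in A}C_{i(x)}(x)$. The $1/m$ factor is gone, and the cost now sits on $A$. An algorithm that wastes many queries only on its rejecting paths makes this incomparable to $C_+$.

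The paper avoids both problems differently. It compares $x_0$ with all $n$ inputs $x_j$ at once through an input-label register and charges each call to $\frac{1}{2n}\sum_j\nrm{P_i\ket{\psi^j_h}}^2$, the whole-block weight of the positive inputs. Query components dominated by label $0$ need no charge: their raw progress already exceeds the cap, so classical outcomes only lose progress. The $1/m$ then comes from $\sum_i C_i(y)=C(y)$ for $y\in B$ (\Cref{lem:one-block}). Your Step~3 is fine.
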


To obtain the lower bound part of \Cref{thm:andor-main}, we take $\rho = 1/(2 q)$ and apply the restart construction of \cite[Proposition~3.4]{hamoudi2024nisq}: restart a subroutine whenever a classical-query flag occurs. An $r$-query circuit, $r \leq q$, has a noise-free attempt with probability $(1 - \rho)^r \geq 1/2$; conditional on this event its output distribution is unchanged, and the expected simulation cost is at most $2 r$. Cap the number of attempts at each node so that the total abort probability is at most $1/12$. On abort, output an arbitrary fixed value. The original error $1/3$ then increases by at most the abort probability $1/12$, giving error at most $5/12$.

An algorithm using at most $L$ queries therefore gives a noisy algorithm with bounded query count, error at most $5/12$, and $C_+ \leq 2 L$. \Cref{thm:andor-noisy} yields $L = \Omega(mn/q)$. The usual quantum lower bound $Q(F) = \Omega(\sqrt{mn})$ follows from \cite[Theorem~2 and Section~5.3]{andor-ambainis}.
\end{proof}

We first compare the algorithm on an input with one all-zero OR block to the same algorithm on $n$ inputs obtained by adding a $1$ in that block.

Fix a block $i$ and the position of the unique $1$ in every other block. Let $x_0 \in A$ have block $i$ all zero, and let $x_j \in B$ be obtained from $x_0$ by setting its bit $(i, j)$ to $1$, for $j \in [n]$. The algorithm can query all blocks throughout this comparison.

For a query state $\sigma$, the query weight on block $i$ is $\operatorname{Tr}(P_i \sigma)$, the sum of the query magnitudes on its indices \cite[Definition~3.2]{andor-bbbv}. Let $C_i(x)$ be the expected sum of these weights over the execution on input $x$. Since $\sum_i P_i = I$, we have $\sum_i C_i(x) = C(x)$. We will prove

\begin{lemma}
\label{lem:one-block}
\begin{equation}
\frac{1}{n} \sum_{j = 1}^n M(p_{x_0}, p_{x_j})
\leq \frac{4096}{\rho n} \frac{1}{n} \sum_{j = 1}^n C_i(x_j).
\label{eq:andor-one-block}
\end{equation}
\end{lemma}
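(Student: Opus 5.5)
The plan is to follow the proofs of \Cref{thm:classification} and \Cref{thm:2}, but to let the noise of \eqref{eq:andor-channel} replace the hard cap $q$. In those proofs the cap produced the factor $r_h^2\le q r_h$; here the noise rate $\rho$ should produce a factor $1/\rho$ instead. I will treat the computation as one long transcript-generating process. Each call of $\mathcal N_{\rho,x}$ either leaves the state coherent (flag $\mathrm Q$) or measures $(i,j,\nu)$ and records a classical value (flag $\mathrm C$). All flags and classical values, together with any intermediate measurement outcomes, go into the transcript. For each pair $(x_0,x_j)$ and each transcript prefix $h$, I condition on $h$ and look at the (subnormalized) pure states $\ket{\phi_0^h}$ and $\ket{\phi_j^h}$ that the purified algorithm holds after $h$. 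The two inputs differ only at the single bit $(i,j)$. So within a coherent stretch, the standard hybrid step of \cite{andor-bbbv} shows that one call increases $\nrm{\ket{\phi_0^h}-\ket{\phi_j^h}}$ by at most $2\sqrt{w_j^h}$, where $w_j^h$ is the query magnitude on index $(i,j)$.

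The key steps, in order, are as follows.

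First, I define a combined potential $\Phi_j$. It is the progress measure $M$ on the transcript distributions (as in \Cref{lem:cond}) plus $c\rho$ times the expected squared distance $D_j=\mathbb E_h\nrm{\ket{\phi_0^h}-\ket{\phi_j^h}}^2$ between the current quantum states. At termination all registers are measured, so $M(p_{x_0},p_{x_j})$ is at most the final value of $\Phi_j$, by the data-processing direction of \Cref{lem:cond}.

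Second, I bound the change of $\Phi_j$ in one call by splitting on the flag.
\begin{itemize}
\item With probability $1-\rho$ the call is coherent. Then $D_j$ becomes at most $(\sqrt{D_j}+2\sqrt{w_j})^2\le(1+\rho/2)D_j+(1+2/\rho)\,4w_j$.
\item With probability $\rho$ the call is classical. The argument of \Cref{lem:cond} converts the existing quantum distance into transcript distinguishability, so $M$ grows by at most $D_j$. In addition, the returned value differs between the two inputs only when the measured index is $(i,j)$, which adds at most $\rho\,w_j$ to $M$. After this measurement the quantum distance is not larger than before.
\end{itemize}

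Third, I use the resulting damping. $D_j$ is multiplied by roughly $(1-\rho)(1+\rho/2)\le 1-\rho/2$ per call, before the additive term. Choosing $c$ of order $4$ therefore makes the conversion loss $\rho D_j$ be paid for by the decay of $c\rho D_j$. The net increase of $\Phi_j$ per call is then $O(w_j/\rho)$.

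Summing over calls gives $M(p_{x_0},p_{x_j})=O\big(\tfrac1\rho\,\mathbb E\sum_{\text{calls}}w_j\big)$. Summing over $j$ then uses $\sum_j w_j=\operatorname{Tr}(P_i\sigma)$, so the right-hand side becomes $O(C_i/\rho)$, and dividing by $n$ gives the form of \eqref{eq:andor-one-block}.

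The main obstacle is that these weights are naturally taken on one input, say $x_0$, whereas \eqref{eq:andor-one-block} is stated with $C_i(x_j)$. I would handle this by measuring $w_j$ along the $x_j$ run in the hybrid argument (replace $O_{x_j}$ by $O_{x_0}$ call by call, starting from the $x_j$ side). Alternatively, if the weights come out on $x_0$, I would transfer them using $|\operatorname{Tr}(P_i\sigma_0)-\operatorname{Tr}(P_i\sigma_j)|\le 2\nrm{\phi_0-\phi_j}$ together with $D_j$ already being controlled. Any bounded extra terms can be absorbed into the generous constant $4096$. A second point needing care is making the transcript-conditioning in the classical branch rigorous for subnormalized states and for the posterior weights. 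Here I would reuse the $(1-M_t(h))$ bookkeeping of \Cref{lem:cond} verbatim.
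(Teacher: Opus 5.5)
There is a genuine gap. It sits in exactly the two steps that are supposed to produce the factor $\frac{1}{\rho n}$.

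\textbf{The damping step is false.} In the classical branch, $\mathcal N_{\rho,x}$ measures only the query register $\ket{i',j',\nu}$. Whatever part of $\ket{\phi_0^h}-\ket{\phi_j^h}$ is stored in the workspace survives. For every index other than $(i,j)$ both inputs return the same value, so the subnormalized squared distance, summed over outcomes, is exactly preserved. You say this yourself (``not larger than before''). But the contraction factor $(1-\rho)(1+\rho/2)$ requires the distance to be erased in that branch. With the correct accounting, $D_j$ is multiplied by $(1-\rho)(1+\rho/2)+\rho\ge 1$.

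At the same time, you add up to $\rho D_j$ to $M$ on every call via \Cref{lem:cond}, so the same distance is charged again and again. Once the runs on $x_0$ and $x_j$ have diverged, your per-call bound on $\Delta\Phi_j$ is of order $\rho$ even on calls that never touch block $i$. Summing gives order $\rho T$, not $O(\frac1\rho\sum w_j)$. Damping can only be local, on the component that is currently querying the differing index. A correct per-pair potential does exist: count $\nrm{\ket{\psi_h^0}-\ket{\psi_h^j}}^2$ twice on histories with no classical record at $(i,j)$ and once on the others. One can check that this gives $M(p_{x_0},p_{x_j})\le\frac{16}{\rho}$ times the total weight on the single index $(i,j)$, measured on either input.

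\textbf{Even a correct per-pair bound does not sum to \eqref{eq:andor-one-block}.}
\begin{itemize}
\item With weights on $x_0$, the identity $\sum_j w_j=\operatorname{Tr}(P_i\sigma_0)$ gives $\frac{O(1)}{\rho n}C_i(x_0)$.
\item With weights on $x_j$, the $w_j$ are traces against $n$ different states. Their sum is only bounded by $\sum_j C_i(x_j)$, so the factor $\frac1n$ is lost. This bound is attained when each $x_j$ concentrates its block-$i$ queries on $(i,j)$.
\item Your transfer $|\operatorname{Tr}(P_i\sigma_0)-\operatorname{Tr}(P_i\sigma_j)|\le 2\nrm{\phi_0-\phi_j}$ adds $2\sum_t\nrm{\phi_0^{(t)}-\phi_j^{(t)}}$ over all calls. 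That is of order $T$ once the runs separate, so it cannot be absorbed into $4096$.
\end{itemize}
This is not cosmetic. Fed into \Cref{thm:andor-noisy}, a bound in terms of $C_i(x_0)$ charges each $x\in A$ only for its weight on its own all-zero block. That weight is not controlled by $C(x)/m$, so you get only $T=\Omega(\rho n)$ instead of $\Omega(\rho mn)$.

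\textbf{What the paper does instead.} It compares all $n+1$ inputs in one vector \eqref{eq:andor-comparison-vectors} with an input-label register. The raw form weights the deviation $\nrm{\Pi_{v_S}^{\perp}\phi}^2$ from the common direction $\ket{v_S}$ by $4(1-\lambda_S)$, and classically resolved contributions are capped at $\frac14\nrm{\phi}^2$. In \Cref{lem:andor-position}, the quantum gain at position $s$ is a cross term bounded by $\frac{16}{\sqrt n}\nrm{a}\nrm{b}$; the $\frac{1}{\sqrt n}$ comes from the geometry of $\ket{v_S}$. This gain is paid for in one of two ways:
\begin{itemize}
\item by the loss of order $\rho\nrm{a}^2$ when answer $1$ separates label $s$ and drops its weight from about $4$ to $\frac14$; or
\item when label $0$ dominates the query at $s$, by the $\frac{\rho}{4}\nrm{\phi}^2$ loss of a component whose raw progress is already above the cap.
\end{itemize}
This charges every increment to $\nrm{\Pi_S\ket{\phi_{s,\nu}}}^2$, the positive inputs' weight on all of block $i$, with the factor $\frac{1}{\rho n}$. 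That is exactly what \eqref{eq:andor-one-block} requires.
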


We now deduce \Cref{thm:andor-noisy} from \Cref{lem:one-block}.
\begin{proof}[Proof of \Cref{thm:andor-noisy}]

We average
\eqref{eq:andor-one-block} over the $mn^{m - 1}$ choices of block $i$ and positions of the unique $1$s outside it. As these choices and $j \in [n]$ vary, $(x_0, x_j)$ runs once through $R$, and $(x_j, i)$ runs once through $B \times [m]$. The average on the left is therefore the relation average in \eqref{eq:andor-main}; the cost average is
\[
\frac{1}{mn^m} \sum_{y \in B} \sum_{i = 1}^m C_i(y)
= \frac{1}{mn^m} \sum_{y \in B} C(y)
= \frac{C_+}{m}.
\]
This proves \eqref{eq:andor-main}. The decision conclusion follows from the bound $\mathrm{TV}(p, q)^2 \leq M(p, q)$ established in \Cref{sec:general}, since $\mathrm{TV}(p_x, p_y) \geq 1 - 2 \epsilon$ and $C_+ \leq T$.
\end{proof}

\section{Comparison of inputs in one OR block}
\label{sec:andor-comparison}

In this section, we prove \Cref{lem:one-block}.

\medskip
\noindent\textbf{Query and output vectors.}
We fix the inputs $x_0, \ldots, x_n$ from \Cref{lem:one-block}. Let $\ket{\psi_\inlabel}$ be the algorithm's starting state (which is independent of the input). We purify the quantum algorithm in the standard way, postponing all measurements to the end of the algorithm and replacing transformations that depend on classical outcomes by controlled transformations that depend on qubits. The classical outputs of a noisy query are copied into separate quantum registers, which can be used as controls for transformations but are not modified.

A history $h$ records the successive oracle outcomes: $Q$ for a quantum query, and $(C, i', j', \nu, v)$ for a classical query at $(i', j')$, with phase bit $\nu$ and returned value $v$. Write $|h|$ for its length and $he$ for the history obtained by appending outcome $e$. Other measurement records are included in the purification rather than in $h$.

For input $x_j$ and a history $h$ of length $t$, let $|\psi_h^j\rangle$ be the subnormalized vector for the part of the computation about to make its $(t + 1)$st query, and let $|\psi_{h, \mathrm{out}}^j\rangle$ be the subnormalized vector for the part that terminates after these $t$ queries. The latter includes the output register. Their squared norms are the corresponding joint probabilities, including the probability of the history $h$.

Each oracle call preserves total squared norm when summed over its outcomes. The intervening input-independent operations preserve inner products and split the state into orthogonal query and output parts. Terminated parts are retained unchanged, with distinct histories and termination times occupying orthogonal subspaces.

All branches terminate after at most $T$ calls. Reading the output register in the direct sum of the terminal vectors gives the distribution $p_{x_j}$. Consequently,
\begin{equation}
\label{eq:new}
\sum_{|h| \le T} \|\psi_{h, \mathrm{out}}^j\|^2 = 1,
\qquad
C_i(x_j) =
\sum_{|h| < T} \|P_i|\psi_h^j\rangle\|^2.
\end{equation}

To compare these inputs in one vector, adjoin an input-label register \cite[Section~3]{andor-ambainis} with space $\mathcal Z = \operatorname{span}\{\ket{0}, \ldots, \ket{n}\}$. Label $0$ represents the all-zero block, and label $j \in [n]$ represents the block whose unique $1$ is at position $j$. The algorithm's unitaries act as identity on this register. Define
\begin{equation}
\ket{\phi_h} = \frac{1}{\sqrt{2}} \ket{0} \ket{\psi_h^0} + \frac{1}{\sqrt{2 n}} \sum_{j = 1}^n \ket{j} \ket{\psi_h^j}.
\label{eq:andor-comparison-vectors}
\end{equation}
The coefficients assign weight $1/2$ to $x_0$ and weight $1/(2 n)$ to each $x_j$. Use the same coefficients to define $|\phi_{h, \mathrm{out}}\rangle$ from the output vectors. Write $|\widehat\phi_{he}\rangle$ for the subnormalized comparison vector immediately after the oracle outcome $e$, before the subsequent input-independent operations.

\medskip
\noindent\textbf{The progress measure.}
We keep track of positions excluded by classical answers and differences between the remaining input vectors. If $h$ has not recorded a classical answer $1$ in block $i$, the positions not yet excluded are
\[
S_h = [n] \setminus \{s:(\mathrm C, i, s, \nu, 0) \text{ occurs in }h \text{ for some } \nu\}.
\]
If $j \notin S_h$, then $h$ contains a classical answer $0$ at $(i, j)$, which has probability zero on input $x_j$. Hence
\begin{equation}
\ket{\psi_h^j} = \ket{\psi_{h, \outlabel}^j} = 0\qquad\text{for }j \in [n] \setminus S_h.
\label{eq:andor-support}
\end{equation}
For $S \subseteq [n]$, let $\Pi_S = \sum_{j \in S} \ket{j} \bra{j}$ project onto the labels in $S$. For nonempty $S$, set
\[
\lambda_S = \frac{n - |S|}{n - 1},\qquad
\ket{v_S} = \frac{\sqrt{n} \ket{0} + \sum_{j \in S} \ket{j}}{\sqrt{n + |S|}},\qquad
\Pi_{v_S}^{\perp} = I_{\mathcal Z} - \ket{v_S} \bra{v_S}.
\]
The weights in $\ket{v_S}$ come from \eqref{eq:andor-comparison-vectors}: a comparison vector supported on $\{0\} \cup S$ has $\Pi_{v_S}^{\perp} \ket{\phi} = 0$ if and only if its subnormalized algorithm vectors for $x_0$ and $x_j$, $j \in S$, agree. Thus the perpendicular component measures differences between those vectors. The coefficient $\lambda_S$ increases as classical answers exclude positions. We combine these quantities in the raw progress
\[
\Psi_S^{\raw}(\ket{\phi})
= \lambda_S \nrm{\phi}^2 + 4(1 - \lambda_S) \nrm{\Pi_{v_S}^{\perp} \ket{\phi}}^2.
\]
\medskip
\noindent\textbf{Capping the progress.}
For a history with no recorded $1$ in block $i$ and $|S_h| \geq 2$, define its progress contribution by
\[
\Psi_h(\ket{\phi})
= \min\left\{\Psi_{S_h}^{\raw}(\ket{\phi}), \frac{1}{4} \nrm{\phi}^2\right\}.
\]
In all other cases set $\Psi_h(\ket{\phi}) = \nrm{\phi}^2/4$. Call a vector saturated when its contribution equals the cap, and unsaturated otherwise. Conservation of squared norm ensures that the sum of outcome contributions from a saturated query vector cannot exceed its current contribution. Apply the cap separately at each history, to its query vector and to the direct sum of its output vectors.

Because input-independent unitaries act trivially on $\mathcal Z$, they preserve the raw form and $\nrm{\cdot}^2/4$. Both forms are additive under the query/output split. The inequality $\min\{a_1, d_1\} + \min\{a_2, d_2\} \leq \min\{a_1 + a_2, d_1 + d_2\}$ therefore gives
\begin{equation}
\Psi_{he}(\ket{\phi_{he}}) + \Psi_{he}(\ket{\phi_{he, \outlabel}})
\leq \Psi_{he}(\ket{\widehat\phi_{he}}).
\label{eq:andor-processing}
\end{equation}
The same statement holds for the initial split. At query count $t$, the total progress is
\[
\Psi_t = \sum_{|h| \leq t} \Psi_h(\ket{\phi_{h, \outlabel}}) + \sum_{|h| = t} \Psi_h(\ket{\phi_h}).
\]
At the final query count $T$, the query vectors vanish, so $\Psi_T$ is the sum of the output contributions.

\begin{lemma}
\label{lem:andor-progress}
Let $x_0 \in A$ have block $i$ all zero. For $j \in [n]$, let $x_j \in B$ be obtained from $x_0$ by changing bit $(i, j)$ to $1$. Then
\begin{align}
\Psi_0 &= 0, \label{eq:andor-initial-progress}\\
\frac{1}{n} \sum_{j = 1}^n M(p_{x_0}, p_{x_j})
&\leq 16 \Psi_T, \label{eq:andor-final-progress}\\
\Psi_{t + 1} - \Psi_t
&\leq \frac{256}{\rho n^2} \sum_{j = 1}^n \sum_{h: |h| = t} \nrm{P_i \ket{\psi_h^j}}^2.
\label{eq:andor-query-increment}
\end{align}
The last inequality holds for $0 \leq t < T$.
\end{lemma}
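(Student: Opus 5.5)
We verify the three claims of \Cref{lem:andor-progress} separately. The initial bound \eqref{eq:andor-initial-progress} is immediate. At $t=0$ the only history is empty, so $S_\emptyset=[n]$ and $\lambda_{[n]}=0$. All $\ket{\psi_\emptyset^j}$ equal the input-independent start vector (and likewise the output parts), so $\ket{\phi}$ is proportional to $\ket{v_{[n]}}\otimes(\cdot)$. Hence $\Pi^\perp_{v_{[n]}}\ket{\phi}=0$, the raw progress is $0$, and the capped value is $0$ as well, both for the query part and for the output part.

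For the final bound \eqref{eq:andor-final-progress}, we fix a terminal history $h$ and an output symbol $o$. We write $a_j=\langle o|\psi^j_{h,\outlabel}\rangle$ in the output register, so that $p_{x_j}(o)$ is a sum of $\|a_j\|^2$ over $h$ and the remaining registers. Using $(p-q)^2/(p+q)\le 2(\sqrt p-\sqrt q)^2$ and the triangle inequality on vector norms, we reduce $M(p_{x_0},p_{x_j})$ to $\sum\|a_0-a_j\|^2$ up to a constant. We then split histories into two cases.

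\emph{Unsaturated histories with $|S_h|\ge2$.} Here $\nrm{\Pi^\perp_{v_S}\phi}^2$ with $S=S_h$ controls $\frac1n\sum_{j\in S}\|a_0-a_j\|^2$ up to a constant; this is an explicit computation of the component of $(a_0,a_1,\dots)$ orthogonal to $v_S$. Labels $j\notin S$ contribute $\frac1n\|a_0\|^2$ each by \eqref{eq:andor-support}. There are $n-|S|$ of them, giving $\frac{n-|S|}{n}\|a_0\|^2\lesssim\lambda_S\nrm{\phi}^2$. We need the factor $4(1-\lambda_S)$ not to degenerate. When $|S|\ge 2$ we have $1-\lambda_S=(|S|-1)/(n-1)$, which is small for small $S$; in that regime, however, $\lambda_S$ is close to $1$, and then $\nrm{\phi}^2$ alone already bounds everything.

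\emph{Saturated or otherwise capped histories.} We bound the contribution trivially by $\nrm{\phi}^2$, which equals $4\Psi_h$.

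Collecting constants gives the factor $16$.

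The increment bound \eqref{eq:andor-query-increment} is the main obstacle. By \eqref{eq:andor-processing}, it suffices to compare $\sum_e\Psi_{he}(\widehat\phi_{he})$ with $\Psi_h(\phi_h)$ for a single history $h$ of length $t$. A saturated $\phi_h$ gives no increase, by norm conservation. For unsaturated $\phi_h$ we decompose $\phi_h=P_i\phi_h+(I-P_i)\phi_h$.

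The part $(I-P_i)\phi_h$ is unaffected by $x$ differences in block $i$. Its quantum branch is scaled by $\sqrt{1-\rho}$, and its classical branch has $S$ unchanged, so this part contributes no increase. The cross terms must be handled by Cauchy--Schwarz.

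On $P_i\phi_h$ two effects occur.
\begin{itemize}
\item The quantum branch applies $O_x$. It differs across labels only on $P_{i,j}$ for label $j$, so it increases $\nrm{\Pi^\perp\phi}^2$ by at most $O(\frac1n\sum_j\|P_{i,j}\psi^j_h\|\cdot\|\Pi^\perp\phi\|)$ plus a square term.
\item The classical branch, with probability $\rho$, measures index $(i,s)$. An answer $0$ removes $s$ from $S$ and raises $\lambda$ by $1/(n-1)$, while killing the label-$s$ component; an answer $1$ occurs only on label $s$ and saturates the history.
\end{itemize}

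The key trade-off is as follows. The quantum increase is linear in the perpendicular norm. We use $2ab\le \rho n\,a^2/c+ c\,b^2/(\rho n)$ and absorb the $\rho$-weighted part into the \emph{decrease} of $4(1-\lambda)\nrm{\Pi^\perp\phi}^2$ that the classical branch produces: the perpendicular component of $P_i\phi$ gets projected and loses a $\rho$ fraction. The leftover term is $\frac{C}{\rho n^2}\sum_j\|P_i\psi^j_h\|^2$. We must also check that the growth of $\lambda_S$ under classical zeros costs at most $\rho\frac{1}{n-1}\nrm{P_{i}\phi}^2$-type terms of the same form; the cap $\frac14\nrm{\phi}^2$ and the choice of the factor $4$ in the raw form should make this go through. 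We would verify the constant $256$ last.
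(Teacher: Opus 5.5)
Your treatment of \eqref{eq:andor-initial-progress} matches the paper. Your argument for \eqref{eq:andor-final-progress} is essentially the paper's. Replacing the Hellinger/fidelity step by $(p-q)^2/(p+q)\le 2(\sqrt p-\sqrt q)^2$ plus the reverse triangle inequality on the direct sum over histories is a valid elementary substitute. The constant $16$ comes from the saturated case; in the unsaturated case the cap forces $\lambda_{S_h}<\frac14$, so $1-\lambda_S$ never degenerates.

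\textbf{The gap in the increment bound.} The gap is in \eqref{eq:andor-query-increment}. You leave the one-index computation open, and the mechanism you sketch does not give the stated bound, even though the high-level trade-off (quantum gain linear in a difference amplitude, absorbed by the $\rho$-weighted classical loss via AM--GM) is the right one.

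First, a minor point: there are no cross terms to control. The raw form is $\bra{\phi}(\lambda_S I+4(1-\lambda_S)\Pi^{\perp}_{v_S})\otimes I\ket{\phi}$. It commutes with every $I_{\mathcal Z}\otimes P_{i',j',\nu}$, and the oracle is diagonal in the query index, so everything is exactly additive over query indices. A Cauchy--Schwarz bound would instead create $\nrm{(I-P_i)\phi_h}\nrm{P_i\phi_h}$ terms that the right-hand side cannot pay for.

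Now fix $(i,s,\nu)$ with $s\in S$, let $S'=S\setminus\{s\}$, and put $\ket a=(\bra s\otimes I)\ket{\phi_{s,\nu}}$ and $\ket b=(\bra{v_{S'}}\otimes I)(I-\Pi_{\{s\}})\ket{\phi_{s,\nu}}$.
\begin{itemize}
\item A classical answer $0$ does not remove a $\rho$ fraction of the perpendicular component. Only $\frac{1}{n+r}\nrm{\sqrt{n+r-1}\ket a-\ket b}^2$ is lost; the differences among the labels in $\{0\}\cup S'$ survive.
\item The quantum sign flip preserves $\nrm a$, so its effect on the raw form is exactly bilinear, $16\nu(1-\lambda)\frac{\sqrt{n+r-1}}{n+r}\Re\langle a|b\rangle$, with no $\nrm a^2$ term.
\end{itemize}
Splitting the quantum change into ``diagonal amplitude times perpendicular norm plus a square term'' and bounding the pieces separately destroys this cancellation. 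For $\ket{\phi_{s,1}}=\ket s\ket a$ the true quantum change is $0$. Your estimate is of order $\nrm a^2$ with weight $1-\rho$, which the $\rho$-weighted classical loss cannot absorb. It exceeds the allowed $\frac{512}{\rho n}\nrm a^2$ once $\rho n$ is large. Summed over $s$, your leftover is controlled only by the diagonal weights $\frac1n\sum_j\nrm{P_{i,j}\psi_h^j}^2$, which can exceed the right-hand side of \eqref{eq:andor-query-increment} by a factor proportional to $\rho n$.

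\textbf{The missing idea: positive-input weight only.} Even with the correct bilinear form, AM--GM against the classical loss (coefficient at most $-2\rho$ on $\nrm a^2$, from the answer-$1$ cap and the lost perpendicular part) leaves a term $\frac{C}{\rho n}\nrm b^2$. The $\lambda$-growth term is likewise at most of order $\frac{\rho}{n}\nrm b^2$; this is your ``$\rho\frac{1}{n-1}\nrm{P_i\phi}^2$''. Both contain the label-$0$ amplitude, i.e.\ the query weight of $x_0\in A$. The right-hand side of \eqref{eq:andor-query-increment} counts only the weight of the positive inputs $x_1,\dots,x_n$, so this is not a matter of constants.

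The paper resolves this with a per-component case split.
\begin{itemize}
\item If $\nrm{\Pi_S\phi_{s,\nu}}^2\ge\nrm{\phi_{s,\nu}}^2/16$, then $\nrm b^2\le 16\nrm{\Pi_S\phi_{s,\nu}}^2$ and the bound follows.
\item Otherwise label $0$ carries most of the mass. Unsaturation gives $\lambda_S<\frac14$ and $r/(n+r)>\frac25$, so the component's raw progress is at least $\frac12\nrm{\phi_{s,\nu}}^2$. The capped classical outcomes then lose at least $\frac{\rho}{4}\nrm{\phi_{s,\nu}}^2$, which absorbs the quantum gain $\frac{16}{\sqrt n}\nrm{\Pi_S\phi_{s,\nu}}\nrm{\phi_{s,\nu}}$ by AM--GM in $\nrm{\phi_{s,\nu}}$.
\end{itemize}
This is where the cap $\frac14\nrm{\phi}^2$ is genuinely used in the increment step. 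Your sketch invokes it only vaguely, so it cannot produce a bound in terms of positive-input weight alone.
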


The initial comparison vector is $\ket{v_{[n]}} \ket{\psi_{\inlabel}}$, whose raw progress is zero. Nonnegativity and \eqref{eq:andor-processing} therefore give \eqref{eq:andor-initial-progress}. We prove \eqref{eq:andor-query-increment} in \Cref{sec:andor-query} and \eqref{eq:andor-final-progress} in \Cref{sec:andor-final}. We now deduce \Cref{lem:one-block}, given those equations.

Summing \eqref{eq:andor-query-increment} and using $\Psi_0 = 0$, \eqref{eq:new}, and \eqref{eq:andor-final-progress} gives
\[
\frac{1}{n} \sum_{j = 1}^n M(p_{x_0}, p_{x_j})
\leq 16 \Psi_T
\leq \frac{4096}{\rho n^2} \sum_{j = 1}^n C_i(x_j).
\]
This proves \Cref{lem:one-block}.

\section{Proof of Lemma \ref{lem:andor-progress} }

\subsection{Progress under one noisy query}
\label{sec:andor-query}

We prove \eqref{eq:andor-query-increment} by first bounding the contribution from one query index and phase bit, then summing over them.

Fix $0 \leq t < T$ and $h: |h| = t$. If $\ket{\phi_h}$ is saturated, conservation of squared norm gives
\begin{equation}
\sum_{e} \Psi_{he}(\ket{\widehat\phi_{he}})
\leq \frac{1}{4} \sum_{e} \nrm{\widehat\phi_{he}}^2
= \Psi_h(\ket{\phi_h}).
\label{eq:andor-saturated}
\end{equation}
The sums range over the possible outcomes $e$ of the noisy oracle.

Now suppose $\ket{\phi_h}$ is unsaturated. Put $S = S_h$, $r = |S|$, and $\lambda = \lambda_S$. Then $r \geq 2$. Since the raw form is at least $\lambda \nrm{\phi_h}^2$, unsaturation implies
\begin{equation}
\lambda < \frac{1}{4},\qquad r > \frac{3 n}{4},\qquad \frac{r}{n + r} > \frac 2 5.
\label{eq:andor-set-size}
\end{equation}

For $s \in S$ and $\nu \in \{0, 1\}$, take the query component $\ket{\phi_{s, \nu}} = (I_{\mathcal Z} \otimes P_{i, s, \nu}) \ket{\phi_h}$ and put $S' = S \setminus \{s\}$. A quantum query applies the phase $(-1)^\nu$ to input label $s$. A classical query separates that label from the others: answer $1$ identifies $x_s$, while answer $0$ replaces $S$ by $S'$. Suppressing the factors $\sqrt{1 - \rho}, \sqrt{\rho}$ and the appended flag and value kets, the postquery vectors are
\[
\ket{\widehat\phi_{\mathrm Q}^{\nu}} = (I - 2 \nu \Pi_{\{s\}}) \ket{\phi_{s, \nu}},\quad\ket{\widehat\phi_{\mathrm C, 1}} = \Pi_{\{s\}} \ket{\phi_{s, \nu}},\quad\ket{\widehat\phi_{\mathrm C, 0}} = (I - \Pi_{\{s\}}) \ket{\phi_{s, \nu}}.
\]
For answer $1$, the contribution equals the cap; for answer $0$, it is the minimum of the raw form for $S'$ and the cap. Measure the local changes relative to the prequery raw form:
\begin{align}
\Delta_{\mathrm Q}^{\nu} &:= \Psi_S^{\raw}(\ket{\widehat\phi_{\mathrm Q}^{\nu}}) - \Psi_S^{\raw}(\ket{\phi_{s, \nu}}), \nonumber\\
\Delta_{\mathrm C} &:= \frac 1 4 \nrm{\widehat\phi_{\mathrm C, 1}}^2 + \min\left\{\Psi_{S'}^{\raw}(\ket{\widehat\phi_{\mathrm C, 0}}), \frac 1 4 \nrm{\widehat\phi_{\mathrm C, 0}}^2\right\} - \Psi_S^{\raw}(\ket{\phi_{s, \nu}}).
\label{eq:andor-increment-definitions}
\end{align}
The quantum outcome has a single history, so we use its raw form before summing over query components. Classical outcomes have separate histories and use their capped contributions. Weighting the changes by the query probabilities gives
\[
\Delta_{s, \nu} := (1 - \rho) \Delta_{\mathrm Q}^{\nu} + \rho \Delta_{\mathrm C}.
\]
When $|S'| = 1$, its raw form is the squared norm, so the minimum still gives the cap.

\begin{lemma}
\label{lem:andor-position}
Let $n \geq 2$, $0 < \rho \leq 1$, and $S \subseteq [n]$ satisfy $|S| \geq 2$ and $\lambda_S < 1/4$. Let $s \in S$ and $\nu \in \{0, 1\}$. For any vector $\ket{\phi_{s, \nu}}$ with input-label support in $\{0\} \cup S$, the increments defined above satisfy
\[
\Delta_{s, \nu} \leq \frac{512}{\rho n} \nrm{\Pi_S \ket{\phi_{s, \nu}}}^2.
\]
\end{lemma}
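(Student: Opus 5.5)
The approach is to split $\ket{\phi} := \ket{\phi_{s,\nu}}$ along $\ket{v_S}$, writing $\ket{\phi} = \ket{\phi_\parallel} + \ket{\phi_\perp}$ with $\ket{\phi_\parallel} = \ket{v_S}\langle v_S|\phi\rangle$ (a vector in the algorithm space) and $\ket{\phi_\perp} = \Pi^\perp_{v_S}\ket{\phi}$. Then I would show two things: the quantum branch can raise the raw form only by a quantity of size $\|\Pi_S\phi\|^2/n$ plus a cross term involving $\phi_\perp$, and the classical branch yields a negative term of order $\rho\|\phi_\perp\|^2/n$ that absorbs that cross term by AM--GM. Write $r = |S|$ and $a_k = (\bra{k}\otimes I)\ket{\phi}$ for the label components, and use \eqref{eq:andor-set-size} throughout: $\lambda_S < 1/4$, $r > 3n/4$, $r/(n+r) > 2/5$.

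\textbf{Quantum branch.} For $\nu = 0$ the query acts trivially, so $\Delta^0_{\mathrm Q} = 0$. For $\nu = 1$ the norm term is unchanged. Put $w = \Pi_{\{s\}}\ket{\phi}$ and $u = \ket{\phi} - w$. Since $w$ is supported on label $s$ and $u$ has no label-$s$ component, $\langle u|w\rangle = 0$, and a direct expansion gives
\[
\|\Pi^\perp_{v_S}(u-w)\|^2 - \|\Pi^\perp_{v_S}(u+w)\|^2 = 4\,\mathrm{Re}\bigl(\langle u|v_S\rangle\langle v_S|w\rangle\bigr).
\]
Here $\langle v_S|w\rangle = a_s/\sqrt{n+r}$. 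I would split $a_s$ and $\langle u|v_S\rangle$ into their contributions from $\phi_\parallel$ and $\phi_\perp$.
\begin{itemize}
\item The parallel$\times$parallel part has size $|\langle v_S|\phi\rangle|^2/(n+r)$. This is at most $\tfrac{1}{r}\|\Pi_S\phi_\parallel\|^2 = O(\|\Pi_S\phi\|^2 + \|\phi_\perp\|^2)/n$, because $\|\Pi_S \ket{v_S}\|^2 = r/(n+r) > 2/5$.
\item The remaining parts are bounded by $O\bigl(\|\phi_\perp\|\,(\|\Pi_S\phi\| + \|\phi_\perp\|)/\sqrt n\bigr)$, plus an $\|(a_s)_\perp\|^2$-type term.
\end{itemize}
Altogether, $4(1-\lambda_S)\cdot 4\,\mathrm{Re}(\cdots) \le \frac{C}{n}\|\Pi_S\phi\|^2 + \frac{C'}{\sqrt n}\|\phi_\perp\|\,\|\Pi_S\phi\| + \frac{C''}{n}\|\phi_\perp\|^2$, where the last term comes from the $\|(a_s)_\perp\|^2$-type contribution together with the $\|\phi_\perp\|^2/n$ part of the parallel$\times$parallel bound. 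Nothing finer is needed for this branch.

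\textbf{Classical branch.} First, if $\Psi^{\raw}_S(\phi) \ge \tfrac14\|\phi\|^2$, then bounding both classical contributions by their caps gives $\Delta_{\mathrm C} \le 0$. Otherwise, with $\lambda' = \lambda_{S'} = \lambda_S + \tfrac{1}{n-1}$ and $\phi' = (I-\Pi_{\{s\}})\phi$, I would use the raw $S'$ form and expand. The norm term changes by $\lambda'\|\phi'\|^2 - \lambda_S\|\phi\|^2 \le \|\phi\|^2/(n-1)$. Unsaturation of the component forces $\|\phi_\perp\|^2 \lesssim \|\phi\|^2$ small relative to the parallel part, so $\|\phi\|^2 = O(\|\Pi_S\phi\|^2)$, since the parallel part has label-$S$ mass fraction $r/(n+r) > 2/5$. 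The perpendicular term changes from $4(1-\lambda_S)\|\Pi^\perp_{v_S}\phi\|^2$ to $4(1-\lambda')\|\Pi^\perp_{v_{S'}}\phi'\|^2$. The key inequality I would prove is $\|\Pi^\perp_{v_{S'}}\phi'\|^2 \le \|\Pi^\perp_{v_S}\phi\|^2 + O(\|a_s\|^2)$, because deleting a label cannot create mismatch. The factor $1-\lambda'$ then supplies $-\tfrac{4}{n-1}\|\Pi^\perp_{v_{S'}}\phi'\|^2$. To get a clean $-c\,\|\phi_\perp\|^2/n$ I would bound $\|\phi_\perp\|^2$ by $\|\Pi^\perp_{v_{S'}}\phi'\|^2 + O(\|a_s\|^2)$. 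The $\|a_s\|^2$ terms are charged to the $1/4$ cap on the answer-$1$ branch and to the quantum estimate. The outcome should be $\rho\Delta_{\mathrm C} \le \frac{C\rho}{n}\|\Pi_S\phi\|^2 - \frac{c\rho}{n}\|\phi_\perp\|^2 + O(\rho)\|a_s\|^2$.

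\textbf{Combining, and the main obstacle.} Adding $(1-\rho)\Delta^\nu_{\mathrm Q} + \rho\Delta_{\mathrm C}$, I would absorb the cross term by AM--GM:
\[
\frac{C'}{\sqrt n}\|\phi_\perp\|\,\|\Pi_S\phi\| \le \frac{c\rho}{2n}\|\phi_\perp\|^2 + \frac{C'^2}{2c\rho}\|\Pi_S\phi\|^2 ,
\]
which is exactly where the $1/\rho$ enters. The hardest step, and the one I would attempt first, is making sure every leftover $\|a_s\|^2$ term really carries a factor $1/n$ rather than merely being bounded by $\|\Pi_S\phi\|^2$. The plan is to write $a_s = \langle v_S|\phi\rangle/\sqrt{n+r} + (a_s)_\perp$, where the first piece costs only $O(\|\Pi_S\phi\|^2/n)$. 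The perpendicular piece $(a_s)_\perp$ must then be absorbed into the negative classical term, or else the bad part must be shown to appear only in the quantum branch for $\nu=1$, where it cancels. I would first try to organize the whole estimate as a $2\times 2$ quadratic form in $\bigl(\|\Pi_S\phi\|/\sqrt n,\ \|\phi_\perp\|/\sqrt n\bigr)$ plus $(a_s)_\perp$, and check that the generous constant $512$ leaves room for all of the above losses.
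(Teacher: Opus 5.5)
\textbf{There is a genuine gap}, and it sits in the step you identify as the place where $1/\rho$ enters. Your AM--GM inequality leaves $\frac{C'^2}{2c\rho}\|\Pi_S\phi\|^2$ with no factor $1/n$. A cross term with coefficient $n^{-1/2}$, played against a negative term with coefficient $\rho/n$, can only give $O(1/\rho)\|\Pi_S\phi\|^2$. That is a factor $n$ short of the lemma, and the loss matters: after summing over $s$ and $\nu$ it would make the final bound only $\Omega(m/q)$ instead of $\Omega(mn/q)$.

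Your first case has a second problem. There you only have $\Delta_{\mathrm C}\le 0$, which leaves $(1-\rho)\Delta_{\mathrm Q}$ uncontrolled. For instance, if $\ket{\phi}$ is mostly on label $0$ with a small label-$s$ part, $\Delta_{\mathrm Q}$ can be of order $\|\Pi_S\phi\|\,\|\phi\|/\sqrt n$. Your parallel/perpendicular pieces also do not vanish when $a_s=0$, although the exact increment does. For $\ket{\phi}$ supported on label $0$, the parallel$\times$parallel and cross pieces are each of order $\|\phi\|^2/n$ and cancel. Your bound nevertheless keeps $\frac{C''}{n}\|\phi_\perp\|^2>0$ against a target of $0$. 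Finally, your own bullet produces a $\|\phi_\perp\|^2/\sqrt n$ term, not the $\|\phi_\perp\|^2/n$ that appears in your summary.

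\textbf{The missing idea} is that the classical branch supplies a negative term with a \emph{constant} coefficient. It lives in exactly the label-$s$ component you treat as a nuisance, and it enters with a minus sign, not as the $+O(\rho)\|a_s\|^2$ you anticipate.

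Write $a$ for the label-$s$ component of $\ket{\phi}$ and $b=(\bra{v_{S'}}\otimes I)\ket{\widehat\phi_{\mathrm C,0}}$. The splitting $\ket{v_S}=(\ket{s}+\sqrt{n+r-1}\,\ket{v_{S'}})/\sqrt{n+r}$ gives the exact identity
$\|\Pi^\perp_{v_S}\phi\|^2=\|\Pi^\perp_{v_{S'}}\widehat\phi_{\mathrm C,0}\|^2+\frac{1}{n+r}\|\sqrt{n+r-1}\,a-b\|^2$.
So your key inequality holds with no $+O(\|a_s\|^2)$ at all. Deleting label $s$ in fact lowers the mismatch term by roughly $\|a\|^2$, while answer $1$ is capped at $\frac14\|a\|^2$. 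Weighting by $4(1-\lambda)$ and using $\lambda_{S'}=\lambda+\frac1{n-1}$, this yields
$\Delta_{\mathrm C}\le -2\|a\|^2+\frac{8}{\sqrt n}\|a\|\|b\|+\frac2n\|b\|^2$.
Replacing $a$ by $(-1)^\nu a$ in the same identity gives exactly
$\Delta^\nu_{\mathrm Q}=16\nu(1-\lambda)\frac{\sqrt{n+r-1}}{n+r}\operatorname{Re}\langle a|b\rangle\le\frac{16}{\sqrt n}\|a\|\|b\|$.
Now AM--GM in $\|a\|$ against $-2\rho\|a\|^2$ produces $\frac{32}{\rho n}\|b\|^2$. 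The constant coefficient on $\|a\|^2$ is what preserves the $1/n$.

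The paper then splits on whether $\|\Pi_S\phi\|^2\ge\|\phi\|^2/16$:
\begin{itemize}
\item If so, $\|b\|^2\le\|\phi\|^2\le 16\|\Pi_S\phi\|^2$, and the bound above finishes.
\item If not, the label-$0$ mass forces $\|\Pi^\perp_{v_S}\phi\|^2>\frac18\|\phi\|^2$. Hence the raw form is at least $\frac12\|\phi\|^2$, and $\Delta_{\mathrm C}\le-\frac14\|\phi\|^2$ because the classical outcomes are capped at $\frac14\|\phi\|^2$ in total. AM--GM in $\|\phi\|$ against $\frac{16}{\sqrt n}\|\Pi_S\phi\|\,\|\phi\|$ then gives $\frac{256}{\rho n}\|\Pi_S\phi\|^2$.
\end{itemize}
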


\begin{proof}
Write $\ket{\phi} = \ket{\phi_{s, \nu}}$, $S' = S \setminus \{s\}$, $r = |S|$, and $\lambda = \lambda_S$. The assumption on $S$ gives \eqref{eq:andor-set-size}. We use the components on label $s$ and in the common direction for $S'$:
\[
\ket{a} = (\bra{s} \otimes I) \ket{\phi},\qquad\ket{b} = (\bra{v_{S'}} \otimes I) \ket{\widehat\phi_{\mathrm C, 0}}.
\]
Since $\ket{v_S} = (\ket{s} + \sqrt{n + r - 1} \ket{v_{S'}})/\sqrt{n + r}$, we have
\begin{equation}
\nrm{\Pi_{v_S}^{\perp} \ket{\phi}}^2 = \nrm{\widehat\phi_{\mathrm C, 0}}^2 - \nrm{b}^2 + \frac{1}{n + r}\left\lVert\sqrt{n + r - 1} \ket{a} - \ket{b}\right\rVert^2.
\label{eq:andor-local-projection}
\end{equation}
Replacing $\ket{a}$ by $(-1)^\nu \ket{a}$ in \eqref{eq:andor-local-projection} gives the quantum change. For classical answer $0$, use the raw form and $\lambda_{S'} = \lambda + 1/(n - 1)$. This gives
\begin{align}
\Delta_{\mathrm Q}^{\nu} &= 16 \nu(1 - \lambda) \frac{\sqrt{n + r - 1}}{n + r} \Re\langle a \mid b\rangle, \nonumber\\
\Delta_{\mathrm C} &\leq (\tfrac 1 4 - \lambda) \nrm{a}^2 + \frac{4 \nrm{b}^2 - 3 \nrm{\widehat\phi_{\mathrm C, 0}}^2}{n - 1} - \frac{4(1 - \lambda)}{n + r}\left\lVert\sqrt{n + r - 1} \ket{a} - \ket{b}\right\rVert^2.
\label{eq:andor-position-changes}
\end{align}
Thus $\Delta_{\mathrm Q}^{\nu} \leq 16 \nrm{a} \nrm{b}/\sqrt n$. After expanding the squared norm in the classical bound, $n + r \geq 4$ makes the coefficient of $\nrm{a}^2$ at most $\frac 1 4 - \lambda - 3(1 - \lambda) < -2$. Using $\nrm{b} \leq \nrm{\widehat\phi_{\mathrm C, 0}}$ and $1/(n - 1) \leq 2/n$ for the other terms gives
\begin{equation}
\Delta_{\mathrm C} \leq -2 \nrm{a}^2 + \frac{8}{\sqrt n} \nrm{a} \nrm{b} + \frac{2}{n} \nrm{b}^2.
\label{eq:andor-classical-decrease}
\end{equation}
We must express the bound in terms of the positive-input weight $\nrm{\Pi_S \ket{\phi}}^2$. The two cases distinguish whether this weight controls the full squared norm or whether the component on label $0$ forces $\Delta_{\mathrm C} < 0$.

\begin{enumerate}
\item Suppose $\nrm{\Pi_S \ket{\phi}}^2 \geq \nrm{\phi}^2/16$. Then $\nrm{b}^2 \leq \nrm{\phi}^2 \leq 16 \nrm{\Pi_S \ket{\phi}}^2$. Combining the quantum bound with \eqref{eq:andor-classical-decrease} gives
\[
\Delta_{s, \nu} \leq \frac{16 - 8 \rho}{\sqrt n} \nrm{a} \nrm{b} - 2 \rho \nrm{a}^2 + \frac{2 \rho}{n} \nrm{b}^2.
\]
Maximizing the quadratic in $\nrm{a}$ and using $0 < \rho \leq 1$ gives
\[
\Delta_{s, \nu} \leq \frac{32}{\rho n} \nrm{b}^2 \leq \frac{512}{\rho n} \nrm{\Pi_S \ket{\phi}}^2.
\]

\item Suppose $\nrm{\Pi_S \ket{\phi}}^2 < \nrm{\phi}^2/16$. The support assumption puts more than $15/16$ of the squared norm on label $0$. Since $(I - \Pi_S) \ket{\phi}$ has only that label,
\[
\nrm{\Pi_{v_S}^{\perp}(I - \Pi_S) \ket{\phi}}^2 = \frac{r}{n + r} \nrm{(I - \Pi_S) \ket{\phi}}^2.
\]
Apply $\nrm{\ket{u} + \ket{v}}^2 \geq \frac 1 2 \nrm{u}^2 - \nrm{v}^2$ to the projections of the label-$0$ and positive-label components onto the orthogonal complement of $\ket{v_S}$. Using $r/(n + r) > 2/5$ gives
\[
\nrm{\Pi_{v_S}^{\perp} \ket{\phi}}^2 \geq \frac{r}{2(n + r)} \nrm{(I - \Pi_S) \ket{\phi}}^2 - \nrm{\Pi_S \ket{\phi}}^2 > \frac 1 8 \nrm{\phi}^2.
\]
The raw progress is therefore at least $\frac 1 2 \nrm{\phi}^2$. The classical outcome contributions sum to at most $\frac 1 4 \nrm{\phi}^2$, so $\Delta_{\mathrm C} \leq -\frac 1 4 \nrm{\phi}^2$. Using $\nrm{a} \leq \nrm{\Pi_S \ket{\phi}}$ and $\nrm{b} \leq \nrm{\phi}$ in the quantum bound gives
\[
\Delta_{s, \nu} \leq \frac{16}{\sqrt n} \nrm{\Pi_S \ket{\phi}} \nrm{\phi} - \frac{\rho}{4} \nrm{\phi}^2 \leq \frac{256}{\rho n} \nrm{\Pi_S \ket{\phi}}^2.
\]
The last step maximizes the quadratic in $\nrm{\phi}$.
\end{enumerate}
\end{proof}

We now sum over the query indices and phase bits. Since the raw form acts on $\mathcal Z$ and the query projectors have orthogonal ranges,
\[
\Psi_S^{\raw}(\ket{\phi_h})
= \sum_{i', j', \nu} \Psi_S^{\raw}\bigl((I_{\mathcal Z} \otimes P_{i', j', \nu}) \ket{\phi_h}\bigr).
\]
The single quantum outcome contributes at most $\Psi_S^{\raw}(\ket{\widehat\phi_{h\mathrm Q}})$, which decomposes in the same way. Classical outcomes at different indices or phase bits have different histories, so their capped contributions are added separately. At indices outside block $i$, all compared inputs agree. At a position outside $S$ in block $i$, every input with a nonzero component has bit $0$ at that position, by \eqref{eq:andor-support}. On each such component, the quantum query applies a common phase and the classical outcome leaves $S$ unchanged. The weighted raw increment is therefore zero. Thus
\begin{equation}
\sum_{e} \Psi_{he}(\ket{\widehat\phi_{he}})
\leq \Psi_h(\ket{\phi_h}) + \sum_{s \in S} \sum_{\nu = 0}^1 \Delta_{s, \nu}.
\label{eq:andor-aggregate}
\end{equation}

Orthogonality of the input labels in \eqref{eq:andor-comparison-vectors}, together with $\sum_{s \in S, \nu} P_{i, s, \nu} \preceq P_i$, gives
\[
\sum_{s \in S} \sum_{\nu = 0}^1 \nrm{\Pi_S \ket{\phi_{s, \nu}}}^2 = \frac{1}{2 n} \sum_{j \in S} \sum_{s \in S} \sum_{\nu = 0}^1 \nrm{P_{i, s, \nu} \ket{\psi_h^j}}^2 \leq \frac{1}{2 n} \sum_{j = 1}^n \nrm{P_i \ket{\psi_h^j}}^2.
\]
Apply \Cref{lem:andor-position} in \eqref{eq:andor-aggregate}, and use \eqref{eq:andor-saturated} for saturated vectors. Summing over $h: |h| = t$ and applying \eqref{eq:andor-processing} proves \eqref{eq:andor-query-increment}.

\subsection{Progress at termination}
\label{sec:andor-final}

\begin{proof}[Proof of \eqref{eq:andor-final-progress}]
Write $H^2(p, q) = 1 - \sum_o \sqrt{p(o)q(o)}$ for squared Hellinger distance \cite[Section~VI]{andor-topsoe}. For unit vectors $\ket{u}, \ket{v}$ and a common measurement with output distributions $p, q$, the comparison with $M$ and the measurement bound for fidelity \cite[Theorem~3.24]{andor-watrous} give
\begin{equation}
M(p, q) \leq 2 H^2(p, q) \leq 2(1 - |\langle u \mid v\rangle|) \leq \nrm{\ket{u} - \ket{v}}^2.
\label{eq:andor-measurement}
\end{equation}
For input $x_j$, combine the output vectors into $\ket{\psi_{\outlabel}^j} = \bigoplus_{|h| \leq T} \ket{\psi_{h, \outlabel}^j}$. By \eqref{eq:new} and termination this is a unit vector, and its output measurement has distribution $p_{x_j}$. Applying \eqref{eq:andor-measurement} and expanding the direct-sum norm gives
\begin{equation}
\frac{1}{n} \sum_{j = 1}^n M(p_{x_0}, p_{x_j})
\leq \sum_{|h| \leq T} \frac{1}{n} \sum_{j = 1}^n \nrm{\ket{\psi_{h, \outlabel}^0} - \ket{\psi_{h, \outlabel}^j}}^2.
\label{eq:andor-output-sum}
\end{equation}

To compare these squared differences with the progress, define, for $\ket{v} = \sum_{j = 0}^n \ket{j} \ket{v_j}$,
\[
D(\ket{v}) = \sum_{j = 1}^n \left\lVert\frac{1}{\sqrt{n}} \ket{v_0} - \ket{v_j}\right\rVert^2.
\]
For an output comparison vector,
\[
D(\ket{\phi_{h, \outlabel}}) = \frac{1}{2 n} \sum_{j = 1}^n \nrm{\ket{\psi_{h, \outlabel}^0} - \ket{\psi_{h, \outlabel}^j}}^2.
\]
Thus \eqref{eq:andor-output-sum} reduces the claim to $D(\ket{\phi_{h, \outlabel}}) \leq 8 \Psi_h(\ket{\phi_{h, \outlabel}})$ for every history $h$. The inequality $D(\ket{v}) \leq 2 \nrm{v}^2$ proves this when $\ket{\phi_{h, \outlabel}}$ is saturated.

For nonempty $S \subseteq [n]$ and a vector $\ket{v}$ supported on $\{0\} \cup S$, the terms in $D$ with $j \notin S$ contribute $(n - |S|) \nrm{v_0}^2/n \leq \lambda_S \nrm{v}^2$. The sum over $j \in S$ vanishes on the common component along $\ket{v_S}$, so its value is unchanged after applying $\Pi_{v_S}^{\perp}$. Its value is at most $D(\Pi_{v_S}^{\perp} \ket{v}) \leq 2 \nrm{\Pi_{v_S}^{\perp} \ket{v}}^2$. Hence
\[
D(\ket{v}) \leq \lambda_S \nrm{v}^2 + 2 \nrm{\Pi_{v_S}^{\perp} \ket{v}}^2.
\]
Apply this with $S = S_h$ and $\ket{v} = \ket{\phi_{h, \outlabel}}$ when the output vector is unsaturated. Then $\lambda_S < 1/4$, and the right-hand side is at most $\Psi_S^{\raw}(\ket{v}) = \Psi_h(\ket{v})$. This completes the output comparison.
\end{proof}

\section{Optimal hybrid unbiased phase and amplitude estimation}\label{sec:UnbiasedEst}

Phase and amplitude estimation protocols~\cite{kitaev1995QMeasAndAbelianStabilizer,brassard2002AmpAndEst} are fundamental building blocks of quantum algorithms. Despite much recent effort in making these primitives more resource friendly~\cite{suzuki2019AEWithoutPE,grinko2019IQAE, fukuzawa2022ModifiedIQAE,magano2022SimplifyingClassQuantInterpolationQSVT,ni2023LowDepthPE,huang2026EigenstatePrepEigenprobFiltering}, there are only a handful of variants that achieve the asymptotically optimal scaling even in the fully coherent regime~\cite{aaronson2019QApproxCountingSimplified,fukuzawa2022ModifiedIQAE,mande2023TightBoundsPE}. Similarly, we are only aware of a few works that claim to achieve the optimal scaling~(\Cref{cor:phaseEstLower,cor:ampEstLower}) in a depth-bounded (hybrid) setting. For the problem of phase estimation, algorithms achieving the optimal trade-off in some regimes of \Cref{cor:phaseEstLower} were introduced in~\cite{wang2018AcceleratedVQE,belliardo2020AchievingHeisenbergScaling}. For the amplitude estimation problem, \cite{giurgicaTiron2020LowDepthAE} proposed two algorithms. The ``QoPrime'' algorithm provably achieves the optimal trade-off for depths $D=\Theta(\epsilon^{-(1-1/m)})$ with $m\geq 2$ a fixed integer, albeit with a prefactor growing with $m$. The ``Power-law Amplitude Estimation'' algorithm claims the optimal trade-off for all depths; however, its analysis~\cite[Theorem~2.3]{giurgicaTiron2020LowDepthAE} relies on unproven regularity conditions~\cite[Appendix~A]{giurgicaTiron2020LowDepthAE}. Neither algorithm explicitly addresses the improved complexity achievable for small values of $p$ (as in \Cref{cor:ampEstLower}).

Here we build upon a conceptually cleaner family of protocols constructed via bounded variance unbiased phase and amplitude estimation procedures~\cite{apeldoorn2022QTomographyWStatePrepUnis,cornelissen2023SubLinQAlgPartFuncEst,vu2024LowDepthAEWithoutReallyTrying}. Beyond conceptual clarity, these have several further advantageous features: they provide unbiased estimators\footnote{In case of implementation with finite resources, their bias can be exponentially reduced in terms of the employed gate complexity, see \cite[Section~6.4]{apeldoorn2022QTomographyWStatePrepUnis}.} and achieve the optimal asymptotic hybrid scaling~(\Cref{cor:phaseEstLower,cor:ampEstLower}) in all parameter regimes while the protocols are non-adaptive, so all depth-$q$ blocks can be executed in parallel, and the classical post-processing is straightforward.

Our optimal hybrid phase and amplitude estimation algorithms are built on the (boosted) unbiased phase estimation protocol of~\cite[Algorithm~2]{apeldoorn2022QTomographyWStatePrepUnis} via a careful analysis of its first few moments, removing the $\exp(-m)$ term present in the analysis of~\cite{apeldoorn2022QTomographyWStatePrepUnis}.

\begin{algorithm}[!ht]
	\caption{Boosted Unbiased Phase Estimation}\label{alg:unbiased}
	\begin{algorithmic}[1] 
		\STATEx	{\bf Input:}\label{line:Input} $2m+1$ copies of $\ket{\psi(\phi)}=\frac{1}{\sqrt{M}}\sum_{k=0}^{M-1}e^{i \phi k}\ket{k}$ (for unknown $\phi$)
		\STATE {\bf For} $j=1$ to $2m+1$
		\STATE ~~~\label{line:randomXi} Sample a uniformly random number $u_j\in[0,1)$ and define $\xi:=\frac{2\pi u}{M}$
		\STATE ~~~\label{line:applyPhase} Apply multi-phase gate $\sum_{k=0}^{M-1}e^{-i\xi k}\ketbra{k}{k}$ to $\ket{\psi(\phi)}_j$
		\STATE ~~~ Perform inverse Fourier transform over $\mathbb{Z}_M$ and measure the state yielding outcome $\ell\in\{0,1,\ldots,M\!-\!1\}\!$
		\STATE ~~~\label{line:correctPhase} {\bf Record} $\varphi_j:=\frac{2\pi \ell}{M}+\xi=\frac{2\pi}{M}(\ell+u_j)$ 
		\STATE\label{line:shortest}Find the shortest interval $I\!=\![a,b]\!\subseteq\! [-2\pi,2\pi]$ such that $I\cup (I+2\pi)$ contains at least $m+1$ of the estimates~$\varphi_j$  
		\STATE\label{line:averageOutput}{\bf If} $a+b\geq 0$ {\bf then return} $\overline{\varphi}:=\frac{a+b}{2}$ {\bf else return} $\overline{\varphi}:=\frac{a+b}{2}+2\pi$					
	\end{algorithmic}
\end{algorithm}	
\noindent The estimates $\varphi_j$ in Line~\ref{line:correctPhase} are independent and identically distributed. As shown by \cite[Theorem 6.1]{apeldoorn2022QTomographyWStatePrepUnis} the probability density $f(\varphi_j)$ of the phase estimate $\varphi_j\in[0,2\pi)$ in terms of its deviation $|\phi-\varphi_j|\in[-\pi,\pi]$ is 
\begin{equation}\label{eq:probDens}
	f(\varphi_j):=\frac{M}{2\pi}\frac{\sinc^2(\frac{M}{2}|\phi-\varphi_j|)}{\sinc^2(\frac{1}{2}|\phi-\varphi_j|)},
\end{equation}
whose error distribution is symmetric and independent of $\phi$, hence the name ``unbiased''; also in the absence of boosting ($m=0$), it is shown in \cite[Section 6.2]{apeldoorn2022QTomographyWStatePrepUnis} that  $Z_0:=\frac{M}{M-1}\exp(i\varphi_1)$ is an unbiased estimator of  $\mathbb{E}[Z_0]=\exp(i\phi)$, and the calculations therein also reveal the covariance matrix of the phase deviation is
\begin{align}\label{eq:m0Cov}
	\operatorname{Cov}[Z_0/\exp(i\phi)]=
	\left(
	\begin{array}{cc}
		\frac{1}{M-1}& 0\\ 0 & \frac{M}{(M-1)^2}
	\end{array}
	\right),
\end{align}
where we think about $Z_0$ as a 2D vector whose coordinates are the real an imaginary parts respectively.
\begin{lemma}\label{lem:varBound}
	For $m\geq 4$ and $M\geq 30$ the boosted estimator $Z_m:=\frac{\exp(i\varphi)}{|\mathbb{E}[\exp(i\varphi)]|}$ is unbiased $\mathbb{E}[Z_m]=\exp(i\phi)$ and the covariance matrix of the phase deviation is
\end{lemma}
\begin{align}\label{eq:BoostedCov}
	\operatorname{Cov}[Z_m/\exp(i\phi)]\preceq\left(1-\sqrt{\frac{m-2}{m-3}}\left(\frac{8\pi}{M}\right)^{\!\!2}\right)^{\!\!-2}
	\cdot
	\left(
	\begin{array}{cc}
		\frac{m-2}{m-3}\left(\frac{8\pi}{M}\right)^{\!4}& 0\\ 0 & \frac{2m-1}{2m-2}\left(\frac{8\pi}{M}\right)^{\!2}
	\end{array}
	\right).
\end{align}
Note that the improved variances $\bigO(\frac{1}{M^4})$, $\bigO(\frac{1}{M^2})$ corresponding to the real and imaginary parts respectively are sufficiently concentrated to realize coherent metrologic advantage needed for achieving the optimal trade-off.
\begin{proof}
The off-diagonal entries of \eqref{eq:m0Cov} are zero due to the symmetry of the distribution. Since the boosting in Line~\ref{line:shortest} is also performed in a symmetric way, we have that for all $m\geq 0$ the covariance of the rotated version of the boosted estimator $Z_m$ has also vanishing off-diagonals.\footnote{$\mathbb{E}[|\exp(i\varphi)|]$ is a constant that does not depend on $\phi$ because the distribution of $\phi-\varphi_1$ does not depend on $\phi$ either, and the boosting is defined in a phase-shift-invariant manner.} Let $\theta:=\phi-\varphi$ and $\theta_j:=\phi-\varphi_j$; now we bound the variance of $\operatorname{Im}(\exp(i\theta))=\sin(\theta)$, i.e., the bottom right element of \eqref{eq:BoostedCov} for $m\geq2$.

We begin by computing a tail bound on $|\theta_j|$. For all $r\in[0,\pi]$ we have
\begin{align}
	\mathbb{P}[|\theta_j|\geq r]&=\int_r^\pi\frac{M}{\pi}\frac{\sinc^2(\frac{M}{2}\alpha)}{\sinc^2(\frac{1}{2}\alpha)}d\alpha\tag{by~\eqref{eq:probDens}}\\&
	\leq \int_r^\pi\frac{\pi M}{4}\frac{\sin^2(\frac{M}{2}\alpha)}{(\frac{M}{2}\alpha)^2}d\alpha\tag{since~$|\sinc(\alpha/2)|\geq\frac{2}{\pi}$}\\&
	\leq \int_r^\infty\frac{\pi}{M\alpha^2}d\alpha\tag{since~$\sin^2(\frac{M}{2}\alpha)\leq1$}\\&
	=\frac{\pi}{M r}\label{eq:thetaTail}
\end{align}

The other key observation is that for all $r\in[0,\pi]$ 
\begin{align}\label{eq:TailBound1}
	\text{if }\quad \left|\left\{j\in[2m+1]\colon |\theta_j|\leq\frac{r}{2}\right\}\right|\geq m+1\quad \text{ then }\quad |\theta|\leq r.
\end{align}
Indeed, then the interval $[-\frac{r}{2},\frac{r}{2}]$ contains at least $m+1$ of the $\theta_j$-s, thus the shortest such interval $I$ from Line~\ref{line:shortest} has length at most $r$. Also, by the pigeonhole principle $I$ must overlap with $[-\frac{r}{2},\frac{r}{2}]$, which then implies that the output in Line~\ref{line:averageOutput}, the average of the endpoints of $I$, satisfies $|\theta|\leq r$.
Thus, a tail bound on $\theta$ follows as
\begin{align}
	\mathbb{P}[|\theta|> r]&\leq \mathbb{P}\left[\left\{j\colon |\theta_j|>\frac{r}{2}\right\}\ge m+1\right]\nonumber\tag{by~\eqref{eq:TailBound1}}\\&
	\leq\sum_{k=m+1}^{2m+1}\binom{m+1}{2m+1}\left(\mathbb{P}\left[|\theta_1|>\frac{r}{2}\right]\right)^{\!m+1}\nonumber\\&
	\leq\sum_{k=m+1}^{2m+1}\binom{m+1}{2m+1}\frac{2\pi}{M r}\tag{by~\eqref{eq:thetaTail}}\\&
	\leq2^{2m}\left(\frac{2\pi}{M r}\right)^{\!\!m+1}
	=\frac{1}{4}\left(\frac{8\pi}{M r}\right)^{\!\!m+1}\label{eq:TailBound2}
\end{align}

Building on the above tail bound now we can bound the bottom right element of \eqref{eq:BoostedCov} for $m\geq 2$ as
\begin{align}
	\operatorname{Var}[\sin(\theta)]=
	\mathbb{E}[\sin^2(\theta)]&
	\leq \mathbb{E}[|\theta|^2]\tag{since $|\sin(\theta)|\le |\theta|$}\\&
	=\int_{0}^{\infty}2r\mathbb{P}[|\theta|>r]dr\tag*{$\left(\mathbb{E}\left[|X|^p\right] = \int_0^\infty p\,x^{p-1}\,\mathbb{P}(|X|>x)\,dx.\right)$}\\&
	\leq \int_{\frac{8\pi}{M}}^{\infty}2r\mathbb{P}[|\theta|>r]dr + \left(\frac{8\pi}{M}\right)^{\!\!2}\nonumber\\&
	\leq \int_{\frac{8\pi}{M}}^{\infty}\frac{r}{2}\left(\frac{8\pi}{Mr}\right)^{\!\!m+1}dr + \left(\frac{8\pi}{M}\right)^{\!\!2}\tag{by~\eqref{eq:TailBound2}}\\&
	=\frac{2m-1}{2m-2}\left(\frac{8\pi}{M}\right)^{\!\!2}.\label{eq:ImVar}
\end{align}
A similar argument, valid for $m\ge 4$ bounds the top left element of \eqref{eq:BoostedCov} as follows
\begin{align}
	\operatorname{Var}[\cos(\theta)]
	&\le \mathbb{E}[(1-\cos(\theta))^2]\tag{$\operatorname{Var}[X]\leq \mathbb{E}[X^2]$}\\&
	\leq \mathbb{E}[|\theta|^4/4]\tag{since $1-\cos(\theta)\le \theta^2/2$}\\&
	=\int_{0}^{\infty}4r^3\mathbb{P}[|\theta|>r]dr\tag*{$\left(\mathbb{E}\left[|X|^p\right] = \int_0^\infty p\,x^{p-1}\,\mathbb{P}(|X|>x)\,dx.\right)$}\\&
	\leq \int_{\frac{8\pi}{M}}^{\infty}4r^3\mathbb{P}[|\theta|>r]dr + \left(\frac{8\pi}{M}\right)^{\!\!4}\nonumber\\&
	\leq \int_{\frac{8\pi}{M}}^{\pi}r^3\left(\frac{8\pi}{Mr}\right)^{\!\!m+1}dr + \left(\frac{8\pi}{M}\right)^{\!\!4}\tag{by~\eqref{eq:TailBound2}}\\&
	=\frac{m-2}{m-3}\left(\frac{8\pi}{M}\right)^{\!\!4}.\label{eq:RemVar}
\end{align}
By Chebyshev's inequality we get 
\begin{align}
	1-|\mathbb{E}[\exp(i\varphi)]|
	=\mathbb{E}[1-\cos(\theta)]
	\leq \sqrt{\mathbb{E}[(1-\cos(\theta))^2]}
	\leq \sqrt{\frac{m-2}{m-3}}\left(\frac{8\pi}{M}\right)^{\!\!2}.\label{eq:LenBound}
\end{align}
	We obtain \Cref{eq:BoostedCov} by combining \eqref{eq:ImVar},\eqref{eq:RemVar}, and \eqref{eq:LenBound}.
\end{proof}
As shown in \cite[Section 6.2]{apeldoorn2022QTomographyWStatePrepUnis}, this suffices to achieve the optimal scaling of \Cref{cor:phaseEstLower} and \Cref{cor:ampEstLower} with using the above unbiased estimators --- in the case of amplitude estimation by taking the real part of the phase estimate coming from the Grover iterate. For completeness, we describe the details of the algorithms in as follows.
\begin{corollary}\label{cor:unbiasedPE}
	Given an eigenstate $U \ket{\psi} = e^{i \phi} \ket{\psi}$ we can prepare the state $\ket{\psi(\phi)}\ket{\psi}=\frac{1}{\sqrt{q+1}}\sum_{k=0}^{q}e^{i \phi k}\ket{k}\ket{\psi}$ with $q$ application of the controlled-$U$ operation. 
	By repeating \Cref{alg:unbiased} (with $m=0$) $\ell\geq 9$ times we obtain an unbiased estimator of $\exp(i\phi)$ of variance $\bigO(\frac{1}{q^2\ell})$, which then solves the problem of \Cref{cor:phaseEstLower} for $\ell=\Theta(\frac{1}{q^2\epsilon^2})$.
\end{corollary}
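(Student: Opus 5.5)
The plan has three parts: prepare the state with exactly $q$ controlled-$U$ calls, turn $\ell$ runs of \Cref{alg:unbiased} into one unbiased estimator, and bound its variance using \Cref{lem:varBound} together with \eqref{eq:m0Cov}. The variance step forces a choice of reading, flagged below.

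\textbf{State preparation.} Put $M=q+1$. Prepare $\frac{1}{\sqrt{M}}\sum_{k=0}^{q}\ket{k}$ in a $\lceil\log_2(q+1)\rceil$-qubit register, using an input-independent unitary. Then, for $s=1,\ldots,q$, apply $U$ to the eigenstate register controlled on the predicate ``$k\ge s$''. This predicate is computed reversibly into an ancilla and uncomputed afterwards, so no queries are spent on it. Branch $\ket{k}$ receives $U^k$, i.e.\ the phase $e^{i\phi k}$, and the eigenstate register stays in $\ket{\psi}$. The result is exactly $\ket{\psi(\phi)}\ket{\psi}$ using $q$ controlled-$U$ calls. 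One such preparation followed by lines 2--5 of \Cref{alg:unbiased} is one depth-$q$ block. The $\ell$ blocks are independent, and the eigenstate register is left in $\ket{\psi}$, so it can be reused.

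\textbf{Unbiasedness and variance.} Each run of lines 2--5 gives an i.i.d.\ sample $\varphi_j$ with density \eqref{eq:probDens}, for which $Z_0$ is unbiased with covariance \eqref{eq:m0Cov}. However, \eqref{eq:m0Cov} has both diagonal entries of order $1/(M-1)=\Theta(1/q)$. So simply averaging the $\ell$ copies of $Z_0$ gives variance $\Theta(1/(q\ell))$, and that alone does not yield the claimed $\bigO(1/(q^2\ell))$. This step is the main obstacle.

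To get the claimed variance I would use the condition $\ell\ge 9=2\cdot4+1$. I read it as saying that the $\ell$ elementary runs are grouped into $g=\lfloor \ell/9\rfloor\ge1$ disjoint groups of nine samples, and lines 6--7 are applied to each group. This reading of ``repeating \Cref{alg:unbiased}'' is my interpretation and should be checked against the intended meaning. It uses no extra queries, since the boosting is classical post-processing of the same samples. For $M\ge30$, \Cref{lem:varBound} with $m=4$ gives i.i.d.\ unbiased estimators $Z_4^{(1)},\ldots,Z_4^{(g)}$ of $e^{i\phi}$. The scalar prefactor in \eqref{eq:BoostedCov} is $\bigl(1-\sqrt2(8\pi/M)^2\bigr)^{-2}$, so the requirement is $M\ge 8\pi\cdot 2^{1/4}\approx 30$, i.e.\ $M\ge 30$ as the lemma states. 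Their covariance is then $\bigO(M^{-2})=\bigO(q^{-2})$, so the empirical mean $\overline Z=\frac1g\sum_r Z_4^{(r)}$ is unbiased with variance $\bigO\bigl(\frac{1}{q^2 g}\bigr)=\bigO\bigl(\frac1{q^2\ell}\bigr)$, because $g\ge \ell/18$. In the remaining range $q<29$, \eqref{eq:m0Cov} already gives variance $\bigO(1/\ell)=\bigO(1/(q^2\ell))$ for the plain average of the $Z_0$, since $q$ is bounded. So the bound holds for all $q$.

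\textbf{Solving \Cref{cor:phaseEstLower}.} Output $\hat\phi=\arg\overline Z$. Chebyshev's inequality gives $|\overline Z-e^{i\phi}|\le \epsilon/2$ with probability $\ge 2/3$ once the variance is at most $c\,\epsilon^2$ for a suitable constant $c$. On that event $|\hat\phi-\phi|=\bigO(\epsilon)$, since $e^{i\phi}$ lies on the unit circle; rescaling $\epsilon$ by a constant fixes the precision. This needs $\ell=\Theta\bigl(\frac{1}{q^2\epsilon^2}\bigr)$, with the requirement $\ell\ge 9$ contributing the $\max$ with a constant. The total query count is $q\ell=\Theta\bigl(\frac{1}{q\epsilon^2}\bigr)$ when $q\le 1/\epsilon$. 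For $q>1/\epsilon$, run only blocks of depth $q'=\Theta(1/\epsilon)$, which costs $\bigO(1/\epsilon)$ queries. Together these give the claimed upper bound $\bigO\bigl(\frac1\epsilon+\frac1{q\epsilon^2}\bigr)$. The procedure is non-adaptive, so all blocks can run in parallel.
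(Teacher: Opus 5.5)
Your proposal is correct and follows the paper's proof. The paper likewise uses \eqref{eq:m0Cov} for small $q$, and otherwise groups the outcomes into 9-tuples post-processed as in Line~\ref{line:shortest} (effectively $m=4$), invokes \Cref{lem:varBound} for an unbiased $\bigO(1/q^2)$-variance estimator per tuple, averages, and applies Chebyshev. So your flagged reading of ``$\ell\ge 9$'' is the intended one, and your explicit state-preparation circuit and your treatment of the $q>1/\epsilon$ regime only fill in details the paper leaves implicit. One small aside: in this hybrid model all qubits are measured and discarded between blocks, so the eigenstate register cannot be reused across blocks as you suggest; each block needs a fresh copy of $\ket{\psi}$, which does not affect the argument.
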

\begin{proof}
	For a single run of \Cref{alg:unbiased} the variance For $q<30$ the stated variance bound  (with $m=0$) follows from \Cref{eq:m0Cov}. Otherwise, by grouping the subsequent outcomes to $9$-tuples and post-processing them as in Line~\ref{line:shortest}, (i.e., effectively running \Cref{alg:unbiased} with $m=4$), each tuple produces an unbiased estimator of variance $\bigO(\frac{1}{q^2})$ due to \Cref{lem:varBound}. Since the subsequent estimates are independent, by taking their average, we get an $\bigO(\frac{1}{q^2\ell})$ variance unbiased estimator of $\exp(i\phi)$. By Chebyshev's inequality, for $\ell=\Theta(\frac{1}{q^2\epsilon^2})$ this results in an unbiased estimator of $\exp(i\phi)$ that produces an estimate within precision $\epsilon$ with probability at least $\frac23$.
\end{proof}
Note that in the above proof we did not attempt to optimize constants. We believe, that by optimizing parameters, the constants can be made competitive with alternative approaches in the literature. Indeed, for example, instead of $m=4$ we could take $m=2$, as then \Cref{eq:ImVar} already holds. Moreover, in practice, it is probably a lot more efficient to consider all runs, and take the smallest arc $I$ that contains at least half of the estimates, and take the average those, slightly modifying the boosting scheme of \Cref{alg:unbiased}.

\begin{corollary}\label{cor:unbiasedAE}
	Let $U \ket{\bar{0}} = \sqrt{p}\ket{1}\ket{\psi_1}+\sqrt{1-p}\ket{0}\ket{\psi_0}$ as in \Cref{cor:ampEstLower}. If $q=\bigO(1)$ we get an unbiased estimator of $p$ with variance $p(1-p)=\bigO(\frac{p(1-p)}{q^2})$ by applying $U$ to $\ket{\bar{0}}$, measuring, and outputting the value of the first qubit. 
	
	In the $1 \ll q \leq (p(1-p))^{-\frac12}$ regime we can get an unbiased estimator of $T_q(p)$ ($T_q$ being the q-th Chebyshev polynomial) by using amplitude amplification $\bigO(q)$-times, measuring, and outputting the value of the first qubit. Taking $\ell=\Theta(\frac{p(1-p)}{q^2\epsilon^2})$ samples yields a variance $\bigO(\epsilon^2/q^2)$ unbiased estimator of $T_q(p)$, which produces a (potentially biased) estimator of $p\approx T_q(p)/q$ with precision $\approx\epsilon$ with probability at least $2/3$.
	
	If $q\geq \max(9\cdot 30,(p(1-p))^{-\frac12})$ we can prepare the state $\frac{1}{\sqrt{2}}\sum_{s\in\pm }\ket{\psi_s}(\ket{\psi(s\phi)})^{\otimes 9}$, where $\ket{\psi}=\frac{1}{\sqrt{2}}(\ket{\psi_{+}}+\ket{\psi_{-}})$, $\phi=2\arcsin(\sqrt{p})$, and $\ket{\psi(\pm\phi)}$ is as in the input line of \Cref{alg:unbiased} for $M=1+\lfloor\frac{ q }{9}\rfloor$ by applying a controlled version of the unitary $U^\dagger(Z\otimes I)U(I-2\ketbra{\bar{0}}{\bar{0}})$. Running \Cref{alg:unbiased} with $m=4$ and taking the real part of the output $\varphi$ is an unbiased estimator of $\cos(\pm \phi)=1-2p$ (and hence $p$) with variance $\bigO(\frac{p(1-p)}{q^2}+\frac{1}{q^4})$.
	Repeating either of the above procedure $\ell$ times we obtain an unbiased estimator of $p$ with variance $\bigO(\frac{p(1-p)}{q^2\ell}+\frac{1}{q^4\ell})$, which then solves the problem of \Cref{cor:ampEstLower} for $\ell=\Theta(\frac{p(1-p)}{q^2\epsilon^2})$.
 \end{corollary}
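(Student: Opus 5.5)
The plan is to treat the three regimes of \Cref{cor:unbiasedAE} separately, each time producing an unbiased single-round estimator $X$ of $p$ (or of a known function of $p$) with variance $\sigma^2$. Averaging $\ell$ independent copies then gives variance $\sigma^2/\ell$, and Chebyshev's inequality turns this into precision $\epsilon$ with probability $2/3$. Every round uses at most $q$ queries, so the total cost is $\ell q$. Choosing $\ell=\Theta(\frac{p(1-p)}{q^2\epsilon^2})$ gives $\Theta(\frac{p(1-p)}{q\epsilon^2})$ queries, which matches \Cref{cor:ampEstLower}. Throughout I work with $\max(1,\cdot)$ rounds so that the $\sqrt{p(1-p)}/\epsilon$ term is covered when $\ell<1$.

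\textbf{Constant $q$.} Apply $U$ to $\ket{\bar 0}$ and measure the first qubit. This gives a Bernoulli$(p)$ sample, which is unbiased with variance $p(1-p)$. Since $q=\bigO(1)$, this is $\bigO(p(1-p)/q^2)$.

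\textbf{Intermediate regime $1\ll q\le (p(1-p))^{-1/2}$.} Write $\phi=2\arcsin\sqrt p$. After $k=\Theta(q)$ Grover iterates the success probability is $\sin^2((2k+1)\phi/2)$, a polynomial in $p$; this is the Chebyshev-type quantity the statement calls $T_q(p)$. Measuring gives an unbiased Bernoulli estimate of it. Because $(2k+1)\phi=\bigO(q\sqrt p)=\bigO(1)$ in this regime, the map $p\mapsto \sin^2((2k+1)\phi/2)$ is invertible with derivative $\Theta(q^2)$ near $p$. I then check that the Bernoulli variance is $\Theta(q^2p)$ and that its propagated variance after inversion is $\Theta(q^2p)/q^4=\Theta(p/q^2)$. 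The inversion is nonlinear, so here I only claim a (possibly biased) estimator with precision $\epsilon$ via the delta method, exactly as the statement hedges. I expect this to be the delicate step. I would make it rigorous by bounding the inverse's second derivative so that the curvature contributes $\bigO(\epsilon)$ when the empirical mean is within $\bigO(\epsilon q^2)$ of its expectation.

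\textbf{Large $q$.} Here $q\ge \max(270,(p(1-p))^{-1/2})$. The Grover-type unitary $G=U^\dagger(Z\otimes I)U(I-2\ketbra{\bar0}{\bar0})$ acts on $U^\dagger$ of the relevant two-dimensional span as a rotation with eigenvalues $e^{\pm i\phi}$, up to the sign conventions. Its eigenvectors $\ket{\psi_\pm}$ decompose $\ket{\bar0}=\frac1{\sqrt2}(\ket{\psi_+}+\ket{\psi_-})$. Controlled powers of $G$ applied to $\frac1{\sqrt M}\sum_k\ket k$ prepare the phase state $\ket{\psi(\pm\phi)}$ with $M-1$ queries, entangled with the eigenvector label. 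I use the same eigenvector register for all 9 copies, which costs $9(M-1)\le q$ queries; tracing out that register yields a mixture of $(\ket{\psi(\phi)})^{\otimes 9}$ and $(\ket{\psi(-\phi)})^{\otimes 9}$.

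Running \Cref{alg:unbiased} with $m=4$ on either branch gives, by \Cref{lem:varBound}, an unbiased estimate of $e^{\pm i\phi}$. Its real part is unbiased for $\cos\phi=1-2p$ in both branches, so branch mixing is harmless. The real part of $Z_m$ has variance controlled as follows. Rotating the covariance in \eqref{eq:BoostedCov} by angle $\phi$ gives $\mathrm{Var}[\Re Z_m]\le \cos^2\phi\cdot\bigO(M^{-4})+\sin^2\phi\cdot\bigO(M^{-2})$. Since $\sin^2\phi=4p(1-p)$ and $M=\Theta(q)$, this is $\bigO(\frac{p(1-p)}{q^2}+\frac1{q^4})$. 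The condition $q\ge (p(1-p))^{-1/2}$ makes the second term dominated by the first, so the bound reduces to $\bigO(p(1-p)/q^2)$. Finally, $\hat p=(1-\Re Z_m)/2$ is unbiased for $p$, and averaging $\ell$ copies with Chebyshev's inequality completes the proof.
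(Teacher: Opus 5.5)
Your proposal is correct and follows the same three-regime route as the paper: Bernoulli sampling for constant $q$, amplitude amplification plus inversion in the intermediate regime, and boosted unbiased phase estimation on the Grover iterate with the covariance of \eqref{eq:BoostedCov} rotated by $\pm\phi$ for large $q$. It also makes explicit two points the paper only gestures at: the $\Theta(q^2)$ slope of the amplified success probability (so the relevant scaling is $q^2p$, not the paper's loose $T_q(p)\approx qp$), and the fact that $q\ge (p(1-p))^{-1/2}$ is exactly what makes the $1/q^4$ variance term negligible.

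Two small fixes are needed. In the intermediate regime, take $k\le cq$ with $c$ small enough that $(2k+1)\phi/2\le\pi/4$, since an $\bigO(1)$ angle alone does not give invertibility. When $\ell<1$, do not run a full $q$-query round; instead cap the block size at $\min(q,\Theta(\sqrt{p(1-p)}/\epsilon))$, which stays in the large-$q$ regime because $\epsilon\le p(1-p)$.
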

\begin{proof}
	For $q< 9 \cdot 30$ the variance is just the variance of he associated Bernoulli random variable corresponding to probability $p$. 
	
	In the $1\ll q \leq (p(1-p))^{-\frac12}$ regime our analysis follows from the simple observation that $T_q(p)\approx q p$.
	
	If $q\geq \max(9\cdot 30,(p(1-p))^{-\frac12})$ the decomposition and the eigenvalues come from the analysis of the original amplitude estimation protocol~\cite{brassard2002AmpAndEst}.
	
	The stated variance bound for a single run of \Cref{alg:unbiased} (with $m=4$) follows from \Cref{lem:varBound}, by ``rotating'' the covariance matrix \Cref{eq:BoostedCov} with angle $\pm\phi$. Since $\sin^2(\pm \phi)=4p(1-p)$ we get the stated $\bigO(\frac{p(1-p)}{q^2}+\frac{1}{q^4})$ variance bound.
	
	 Since the subsequent estimates are independent, by taking their average, we get an $\bigO(\frac{p(1-p)}{q^2\ell}+\frac{1}{q^4\ell})$ variance unbiased estimator of $p$. By Chebyshev's inequality, for $\ell=\Theta(\frac{p(1-p)}{q^2\epsilon^2})$ this results in an unbiased estimator of $p$ that produces an estimate within precision $\epsilon$ with probability at least~$\frac23$.
\end{proof}

\section{Conclusion}

In this work, we developed methods for proving lower bounds on hybrid quantum-classical query algorithms. Our methods are based on a $\chi^2$-type progress measure on the probability distributions of classical 
outcomes of the algorithm. For binary classification of unitaries and search, we developed an inequality that relates the change in progress measure to the distance between final states of a quantum subroutine on the respective inputs. This allowed to handle the problems in a uniform way and has a potential for generalization. For AND-OR, we used a more intricate argument that combines a number of ideas.

As the study of quantum lower bounds in the hybrid model is just starting,
there is a number of open problems in the area. We list some of them below.
\begin{enumerate}
\item
{\bf More direct proof of Lemma \ref{lem:one-block}.}
In its essence, Lemma \ref{lem:one-block} is a very natural statement, as it upper bounds the distinguishability of probability distributions on transcripts via probabilities of
querying the variables that are different in the respective two inputs. This raises a question: can one prove Lemma \ref{lem:one-block} similarly to statements in earlier sections,
by upper bounding the increase in distinguishability after one block by queries via probability of querying the relevant variables?
\item
{\bf AND-OR trees of larger depth.}
Our methods extend to tight lower bounds for AND-OR trees of a larger constant depth, with big-$\bigO$ constants that increase as the tree depth grows. Can one obtain a tight lower bound for AND-OR trees of arbitrary depth?
\item
{\bf Approximate counting.} A natural target for extending our lower bound methods is approximate counting (for example, distinguishing whether the fraction of variables $x_i$ that are equal to 1 is
$\frac{1}{2}$ or $\frac{1}{2}+\epsilon$). Similarly to AND-OR, the optimal quantum lower bound is proven by a positive weight adversary method with a simple distribution over inputs. This indicates that our methods may be extendable to this case.
\item
{\bf Element distinctness.} Another well known problem is element distinctness. Several natural strategies of breaking down the quantum algorithm into smaller algorithms lead to the complexity of
$\bigO(\max(N^{2/3}, \frac{N^2}{q^2})$ queries which provides advantage over the classical algorithm only if $q>\sqrt{N}$. Is this optimal? There are two possible difficulties for proving the lower bound.
First, the quantum lower bound is more complex, as positive-weight adversary no longer gives an optimal result. Second, we feel that the complexity of the problem may be different in the hybrid and the
noisy query models and, therefore, it might not be possible to use the noisy query model for an optimal lower bound.
\end{enumerate}

\section*{Acknowledgments and AI disclosure}
The authors developed the key ideas and wrote the manuscript. LLMs were used only to aid in completing some details of standard technical proofs, finding relevant references, and providing feedback on the writing.
AA, AG, and MK are supported by the QuantERA project HQCC. AG is supported by the European Research Council (ERC) via the Starting grant QuanThermal (101222179).
AG is grateful for insightful discussions about bounded-depth phase and amplitude estimation with Jona Erle, Bálint Koczor, and Po-Wei Huang.

\bibliographystyle{alphaUrlePrint.bst}
\bibliography{qc_gily,ref}
\end{document}